
\documentclass[10pt,twocolumn ]{IEEEtran}
\usepackage{graphicx, color,  amssymb, amsmath, cite}
\usepackage{amsthm}
\usepackage[english]{babel}
\usepackage{cuted}
\usepackage{framed}


\makeatletter
\newcommand{\subjclass}[2][1991]{%
  \let\@oldtitle\@title%
  \gdef\@title{\@oldtitle\footnotetext{#1 \emph{Mathematics subject classification.} #2}}%
}

\makeatother

\newtheorem{thrm}{Theorem}[section]
\newtheorem{prop}[thrm]{Proposition}
\newtheorem{lem}[thrm]{Lemma}
\newtheorem{cor}[thrm]{Corollary}
\newtheorem{defn}[thrm]{Definition}

\newtheorem{ex}[thrm]{Example}
\newtheorem{conj}[thrm]{Conjecture}

\title{Collection and Dissemination of Data on Time-Varying Digraphs}

\author{ Kevin Topley \\  Email: kevint@ece.ubc.ca}

\date{}
\begin{document}

\maketitle

\begin{keywords} data dissemination, broadcast algorithms, routing, random graph, data collection, consensus \end{keywords}

\begin{abstract} Given a network of fixed size $n$ and an initial distribution of data, we derive sufficient connectivity conditions on a sequence of time-varying digraphs for (a) data collection and (b)  data dissemination, within at most $(n-1)$ iterations. The former is shown to enable distributed computation of the network size $n$, while the latter does not. Knowledge of $n$ subsequently enables each node to acknowledge the earliest time point at which they can cease communication, specifically we find the number of redundant signals can be truncated at the finite time $n$. Using a probabilistic approach, we obtain tight upper and lower bounds for the expected time until the \emph{last} node obtains the entire collection of data, in other words complete data dissemination. Similarly tight upper and lower bounds are also found for the expected time until the \emph{first} node obtains the entire collection of data. Interestingly, these bounds are both $\Theta (\text{log}_2(n))$ and in fact differ by only two iterations. Numerical results are explored and verify each result. \end{abstract}

\section{Introduction}\label{sec:intro}

For dynamic communication networks, there may be a given routing algorithm that is optimal for a specific dissemination problem and particular mobility model of the network. Any such routing algorithm is described first by its design, and then is tested via simulations that employ the given assumptions of the network dynamics. When mobility constraints are ``coarsely-grained", that is, occur on a slow time-scale, then the network is relatively stable and thus certain routing protocols will allow for efficient search and usage of paths between the source and destination \cite{smm}. Unfortunately, this process is likely to lead only slowly to insights into the fundamental principles underlying mobile computing \cite{term}. We are thus interested to what extent a high level of mobility can challenge the ability of a network to collect and disseminate data.

Distributed algorithms for data collection and dissemination on static graphs mainly focus on the issue of locality, in other words they operate based on the direct neighborhood of each node \cite{ned, ol04}. In that context, a natural question that may arise is what exactly can be computed locally given limited node resources and/or number of rounds of communication \cite{moni}. Similarly, if a network is highly mobile, then the information being transmitted or received by any node is also restricted to its local, current neighborhood, particularly because the graph might change too quickly to gather any more information. In this sense, we are not so much interested in the message complexity of the task, but rather of its certainty to be solvable in the first place.

We will consider directed graphs that are time-varying and also, to varying degrees, unconstrained in regard to network connectivity. In specific terms, we allow the network digraph to change at each iteration, presuming the set of messages sent at the previous iteration have been received. This in essence is synonymous with synchronous communication; generalizations to analogous asynchronous models follow without any greater insight of our results. To put it another way, we make no assumptions about how long it takes for a message to transfer or how long a path remains stable, only that the network graph locally cannot change faster than it takes for a single message to be delivered. This has been termed ``fine-grained" mobility \cite{term}, which implies that we let the mobility of the nodes be fast enough such that there is no guarantee for a node to send a query to its neighbor and wait for an answer, and slow enough such that when a new neighbor arrives, it will be able to receive a message from its new neighborhood. We do not require that any node is able to detect when its neighborhood changes. Despite this, we are able to solve the iconic ``two-army problem", but only when assuming the disconnectivity conditions of the network are sufficiently constrained. In our deterministic approach we consider varying degrees of network disconnectivity; in our probabilistic approach we assume that the digraph is connected at all times, which, due to our deterministic results, is sufficient for the distributed algorithm to surely succeed after $(n-1)$ iterations.

As a distributed algorithm, we focus mainly on flooding and the specific case of flooding referred to as routing. We consider routing to be the basic problem of a network where information needs to be transported from all nodes to a unique sink node, which concatenates the data. Conversely, routing may require a unique source node to disseminate its local data to all other nodes in the network. As the network changes, any particular routing algorithm can become redundant and ineffective as a method of data collection or dissemination, therefore flooding is an effective remedy to the routing algorithm insofar as it is able to find a path to a set of possibly numerous destinations. Flooding is also of interest when all nodes wish to distribute their local information to the rest of the entire network. This task results in a correct consensus of all nodes regardless of the particular distributed problem that is to be solved (i.e., \cite{olsh,kt11,SB05,bel}). We obtain non-trivial connectivity conditions on when this can be achieved if every node simply broadcasts its set of collected messages, once received, to its neighbors at all times. Furthermore, since each node has limited resources at its disposal, we provide a time point of termination at which all nodes can cease transmitting data; in the deterministic case this time point is less than $(n)$, in the probabilistic case we can expect the termination point to be earlier than $2 \big( \text{log}_2(n) \big)$.

\subsection{Related Work}\label{sec:intro2}

A large number of specific protocols have been proposed for efficient routing in mobile networks \cite{daniel}. These algorithms are based on the available topological knowledge of the network communication graph, or sometimes on the relative coordinates of each node. For the latter set of ``geometric" routing algorithms, it has been shown that greedy (i.e., local) forwarding strategies may lead into dead-ends, while the optimal delivery strategy has only been guaranteed for the static case \cite{fabian} because it involves a preprocessing stage (see also \cite{li,mo08,sun,kok,dim} for further developments in this area). Considering only topology-based routing algorithms, reactive protocols similar to DSR \cite{david} and AODV \cite{charles} make sense in a highly mobile environment. An alternative approach FRESH \cite{henri} takes mobility into account, but, arguably for the worse, it views mobility as a resource rather than a handicap. There also exist hybrid protocols, such as IZR \cite{prince}. Nevertheless, none of these protocols take into account both a potential topological change at every single iteration, and simultaneously a worst-case perspective.

For a sequence of time-varying digraphs, the only thing each node can know with certainty about the graph is the set of incoming messages from their direct neighbors at each time step. Thus we can claim that algorithms designed for dynamic graphs are related to local graph algorithms with respect to the complexity of broadcasts in the network and the limited message sizes/knowledge sets at each node \cite{peleg}. If nodes do not know their neighborhoods, then any broadcast algorithm needs messages in the order of the number of edges in the graph (counting one message per edge).  Various such other lower bounds are given in \cite{faith}. However, these apply only to static graphs; in dynamic graphs, the first question to address is that of solvability. We will answer this problem via network connectivity conditions which provably cannot be improved upon.

\subsection{Outline}\label{sec:intro3}
 
The model of network mobility and the set of algorithmic goals are introduced in Sec.$\ref{sec:model}$. In Sec.$\ref{sec:goal}$ we define the ``collection problem", ``dissemination problem", ``knowledge problem", and ``termination problem". The main results in the deterministic case are presented in Sec.$\ref{sec:main}$, which includes our particular solutions for each of the aforementioned problems. These solutions are the basic tools which we then apply to our results concerning a probabilistic approach of data collection and dissemination, which are presented in Sec.$\ref{sec:main2}$. Numerical simulations illustrated in Sec.$\ref{sec:sim}$ verify our probabilistic analysis and results. These simulations confirm the empirical tightness of our results in terms of the expected time until complete data dissemination and data collection. The conclusions of our results are summarized in Sec.$\ref{sec:con1}$, and some avenues of potential future work are discussed in Sec.$\ref{sec:con2}$. The proofs of all results are contained in the Appendix (Sec. $\ref{sec:app}$). 

\section{Model Assumptions and Algorithmic Goals}\label{sec:model}

Consider a network $\mathcal{V}$ of $n$ nodes, $\mathcal{V} = \{ 1 , 2, \ldots, n \}$, where each node $i \in \mathcal{V}$ has initial data $d_i \in \mathbb{R}$. Inter-node communication is assumed to be instantaneous and occur at discrete times $k \in \mathbb{N}$. At each time instant $k$, the network communication is defined by a time-varying digraph $G(k) = \{  \mathcal{V} , \mathcal{E}(k) \}$, with vertex set $\mathcal{V}$, and edge set $\mathcal{E}(k)$ consisting of ordered pairs of vertices. Node $i$ transmits a message to node $j$ at time $k$ if and only if $(i,j) \in \mathcal{E}(k)$.

Define the ``knowledge state" of node $i$ as $\mathcal{K}_i(k) \supseteq \{ d_i \}$ at time $k$. At $k = 0$ initialize $\mathcal{K}_i(0) = \{d_i \}$ for all $i \in \mathcal{V}$, and assume $d_i \neq  d_j \ \forall \ i \neq j$. If node $i$ transmits a signal to node $j$ at time $k$, we assume that all data held at node $i$ is then contained at node $j$ by time $(k+1)$, \begin{equation}\label{flood}  ( i, j ) \in \mathcal{E}(k) \ \Leftrightarrow \ \mathcal{K}_j(k+1) = \mathcal{K}_j(k) \bigcup \mathcal{K}_i(k)   \end{equation}  The update $(\ref{flood})$ is tantamount to the ``flooding" algorithm considered in \cite{SB05,meh05,kt11,shah}, or ``broadcasting" as it is referred to in, for example, \cite{term,dim}. Implicitly $(\ref{flood})$ states that the local processing time of any signal is negligible, and generally speaking the graph cannot locally change faster than it takes for a message to be transmitted. This type of distributed protocol is not only ostensibly tangible, easily extended to asynchronous communication models, but also has been motivated in the past, for instance, by the shared medium of wireless networks. Note that we do not require any node to be able to learn their actual neighbors at any time, nor do they even require to acknowledge a change in their neighborhood.

\subsection{Algorithmic Goals}\label{sec:goal}

We say that node $j$ has obtained a ``full collection state" (FCS) at time $k$ if $\mathcal{K}_j(k) = \{ d_i \ : \ i \in \mathcal{V} \}$. Conversely, if $\mathcal{V} = \{ i \in \mathcal{V} \ : \ \mathcal{K}_i(k) \supseteq \{ d_j \} \}$, we say that node $j$ has obtained a ``full disseminated state" (FDS) at time $k$. It follows that if at time $k$ all nodes $j \in \mathcal{V}$ have obtained either a FCS or FDS, then the network has obtained a ``full knowledge state" (FKS) at time $k$. We refer the first two of these network connectivity problems respectively as the ``collection problem" and ``dissemination problem".

\begin{defn}\label{def1} Collection Problem (CP): the data $ \{ d_i \}$ of all nodes $i \in \mathcal{V}$ must be obtained by a specific node $q \in \mathcal{V}$.  \end{defn}
  
\begin{defn}\label{def2} Dissemination Problem (DP): the data $\{ d_w \}$ of a specific node $w \in \mathcal{V}$ must be obtained by all nodes $i \in \mathcal{V}$.  \end{defn}

The third problem, which is clearly the composite of CP and DP, is referred to as the ``knowledge problem".

\begin{defn}\label{def3} Knowledge Problem (KP): the data $\{ d_w \}$ of all nodes $w \in \mathcal{V}$ must be obtained by all nodes $i \in \mathcal{V}$.  \end{defn}

Note that, due to $(\ref{flood})$, the solution to the above $3$ problems can be parameterized entirely by network connectivity conditions. This is only one advantage of the flooding algorithm $(\ref{flood})$. Another advantage is that, once a node $i \in \mathcal{V}$ reaches FCS, it can compute \emph{any} function of the initial set of data. The FCS condition is the \emph{only} knowledge state that allows such a privilege. Conversely, if FDS is obtained with respect to a node $j \in \mathcal{V}$, than \emph{all} nodes in the network $\mathcal{V}$ can compute any function of the data $d_j$, and thus the network will have obtained a consensus on this function (presuming all nodes know the common function that they are required to solve). Lastly, if FKS is obtained then we have a network consensus on any function of the initial data \footnote{Again we presume all nodes know the common function that they are required to solve; however, if the initial data itself is able to convey the desired common function, this problem does not require any common \emph{a priori} knowledge within the network, rather it is simply a ``meta-problem" that can solved in an identical fashion to the $3$ problems already defined.}. In terms of distributed computation, there is no knowledge state that is superior to FKS; however, we must assume that node storage and transmission resources are of order (n), or perhaps even larger, depending on what the initial data is and how efficiently it be can be encoded  \cite{SB05,bel,kt11b}.

In addition to the data dissemination and collection problems defined above, a fourth condition can be applied to all $3$. That is of achieving a ``termination" state, wherein every node within the network ceases to transmit signals, thus eliminating the possibility of continuously increasing redundant communication costs.

\begin{defn}\label{def4} Termination Problem (TP): once the problem of interest (CP, DP, or KP) is solved, every node in the network stops transmitting any further messages. \end{defn}

The TP is clearly tantamount to the ``two-army problem", and our solution requires certain connectivity conditions, without which the problem stands unresolved. We will show that the solutions to CP and KP both allow for distributed computation of the network size $n$. This in turn allows each node to know when to cease transmitting signals and thus solves TP when presuming the respective network connectivity conditions hold. Conversely, the solution to DP does not permit a distributed computation of the network size, and thus without \emph{a priori} knowledge of the value of $n$ (or an upper bound for $n$), the connectivity conditions that solve DP will not solve TP, thus no node will know when DP is solved, and hence no node will cease transmitting signals.

\section{Main Results: Deterministic Communication}\label{sec:main}

In this section we present network connectivity conditions that respectively solve CP, DP, KP, and TP (cf. Def.$\ref{def1}-\ref{def4}$) within the finite time $(n-1)$. The connectivity conditions cannot be weakened without increasing the upper bound of $(n-1)$, thus they are not only valid solutions, but also the least restrictive. In Sec.$\ref{sec:main2}$ we consider sequences of random digraphs, each of which is assumed to be connected, thus satisfying all the conditions that solve each of the $4$ aforementioned problems. In Sec.$\ref{sec:main2}$ we prove that the upper bound $(n-1)$ reduces \emph{in expectation} to a function of $n$ that is less than $2  \big( \text{log}_2(n) \big)$. Nonetheless, in the Appendix it shown by example that the upper bound $(n-1)$ cannot be improved upon, even for sequences of connected random digraphs. We now proceed by introducing various types of connectivity conditions that will be used to present our main results.

\subsection{Network Connectivity Definitions: CP, KP, TP}\label{sec:mainintro}

 Let $\mathcal{V}_{-i} = \mathcal{V} \setminus \{i\}$ for any node $i \in \mathcal{V}$. At any time $k$, define an ``input-cord" to node $i$ as an ordered set of nodes $\mathcal{I} ^i (k) \subseteq \mathcal{V}_{-i}$ with the following properties, \begin{equation}\label{ic} \begin{array}{llll} &  \mbox{(a)} \ \  \mathcal{I}^i _j(k) \neq \mathcal{I}^i _r(k) \  , \ \forall \ r \neq j  \\  &   \mbox{(b)} \ \   ( \mathcal{I}^i _j(k) , \mathcal{I} ^i _{j+1}(k) ) \in \mathcal{E}(k)  \ , \\ & \ \ \ \ \ \ \ \ \forall \ j \in \{ 1, 2, \ldots, | \mathcal{I}^i (k) | - 1 \}  \\  &   \mbox{(c)} \ \   (    \mathcal{I}^i_ {| \mathcal{I}^i(k) |} (k) , i ) \in \mathcal{E}(k)  \end{array}  \end{equation} where $\mathcal{I}^i_j(k)$ denotes the $j^{th}$ entry in $\mathcal{I}^i(k)$, and $| \mathcal{I}^i(k)|$ is the cardinality of $\mathcal{I}^i(k)$.  An example of an input-cord to node $6$ is illustrated in Fig.$\ref{fig1}$. 

\begin{figure}[htb] \center \includegraphics[width=0.9 \linewidth]{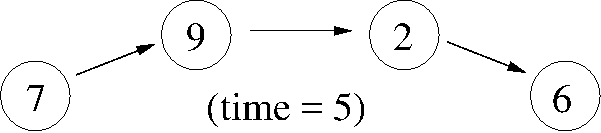} \caption{An input-cord to node $6$, $\mathcal{I}^6(5) = \{ 7,9,2 \}$.} \label{fig1} \end{figure}

We say the input-cord $\mathcal{I}^i(k)$ is ``closed" if $(i, \mathcal{I}_1^i(k) ) \in \mathcal{E}(k)$. The closure of $\mathcal{I}^i(k)$, denoted $ \mathcal{ \hat{I}}^i(k)$, is illustrated in Fig.$\ref{fig4}$.

\begin{figure}[htb] \center \includegraphics[width=0.9 \linewidth]{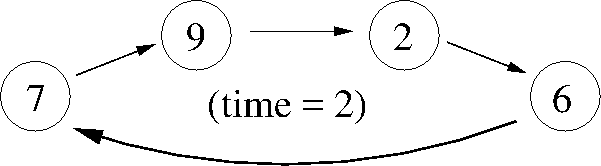} \caption{A closed input-cord to node $6$, $\mathcal{\hat{I}}^6(2) = \{ 7,9,2 \}$.} \label{fig4} \end{figure}

Our first result provides sufficient connectivity conditions to solve CP (cf. Def.$\ref{def1}$).
 
\begin{lem}\label{lem1} Given the update $(\ref{flood})$, if the sequence of input-cords to node $\mbox{q}$ satisfies, \begin{equation}\label{clem1}  | \mathcal{I}^q( k ) | \geq (n-k-1)  \ , \ \forall \ k \in \{ 0,1, \ldots, n-2  \}  \end{equation} then $ | \mathcal{K}_q (k+1) | \geq ( k+2) $. \end{lem}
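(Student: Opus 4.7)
The plan is to proceed by induction on $k\in\{0,1,\ldots,n-2\}$, exploiting the monotonicity $\mathcal{K}_q(k+1)\supseteq\mathcal{K}_q(k)$ built into the flooding update~$(\ref{flood})$.

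For the base case $k=0$: since $\mathcal{I}^q(0)\subseteq\mathcal{V}_{-q}$ has at most $n-1$ distinct entries by property~(a) of~$(\ref{ic})$, the bound $|\mathcal{I}^q(0)|\geq n-1$ forces equality, so $\mathcal{I}^q(0)$ enumerates every element of $\mathcal{V}_{-q}$ as a Hamiltonian chain ending at $q$. By property~(c) of~$(\ref{ic})$ its tail $v$ transmits to $q$, giving $\mathcal{K}_q(1)\supseteq\{d_q,d_v\}$ with $v\neq q$ and hence $|\mathcal{K}_q(1)|\geq 2$.

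For the inductive step, I assume $|\mathcal{K}_q(j+1)|\geq j+2$ for every $j<k$, so in particular $|\mathcal{K}_q(k)|\geq k+1$. If already $|\mathcal{K}_q(k)|\geq k+2$, monotonicity of~$(\ref{flood})$ finishes. Otherwise $|\mathcal{K}_q(k)|=k+1$ exactly, and there are precisely $n-k-1$ nodes in $\mathcal{V}_{-q}$ whose data are not yet in $\mathcal{K}_q(k)$; call this the \emph{uncollected} set $U$. Writing the cord at time $k$ as $(v_1,\ldots,v_m)$ with $m\geq n-k-1$, the update~$(\ref{flood})$ yields $\mathcal{K}_q(k+1)\supseteq\mathcal{K}_q(k)\cup\mathcal{K}_{v_m}(k)$, so the entire task reduces to showing $\mathcal{K}_{v_m}(k)\not\subseteq\mathcal{K}_q(k)$, i.e., the cord tail carries at least one datum from $U$.

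The main obstacle is this last step. The easy situation is $d_{v_m}\notin\mathcal{K}_q(k)$, in which $v_m$'s own datum alone supplies the needed gain. The delicate situation is when $v_m$ is already collected, and I must exhibit some \emph{other} element of $U$ inside $\mathcal{K}_{v_m}(k)$. My plan is to argue by contradiction: if $\mathcal{K}_{v_m}(k)\subseteq\mathcal{K}_q(k)$, then every datum propagated into $v_m$ through the preceding cords $\mathcal{I}^q(0),\ldots,\mathcal{I}^q(k-1)$ must lie in the known set of size $k+1$. Tracing this propagation backwards, the Hamiltonian cord $\mathcal{I}^q(0)$ guarantees $v_m$ starts with at least its cord-$0$ predecessor's datum, and the successive length bounds $|\mathcal{I}^q(t)|\geq n-t-1$, combined with the distinctness condition~(a) of~$(\ref{ic})$, preclude the resulting accumulated data from squeezing into a known set of size $k+1$ without violating the cord-length hypothesis at some earlier time. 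Making this combinatorial bookkeeping fully airtight---particularly for $k$ close to $n-1$, where the short cord at time $k$ gives little room to spare---is the technical heart of the proof.
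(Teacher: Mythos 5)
Your setup is fine through the reduction of the inductive step to the claim that the cord tail $v_m$ satisfies $\mathcal{K}_{v_m}(k)\not\subseteq\mathcal{K}_q(k)$, but that claim is exactly where the proof has to live, and you do not prove it --- you describe a backward-tracing argument and then concede that making the ``combinatorial bookkeeping fully airtight'' is ``the technical heart of the proof.'' That is a genuine gap, not a routine verification. The deeper problem is that your induction hypothesis ($|\mathcal{K}_q(j+1)|\geq j+2$ for $j<k$) says nothing about the knowledge state of $v_m$ or of any other node, so it cannot by itself rule out the delicate case where $v_m$ is already collected. Any invariant strong enough to close the step must track the knowledge sets of nodes other than $q$, and the natural candidate (``every $v\in\mathcal{V}_{-q}$ holds a datum outside $\mathcal{K}_q(k)$'') is simply false in general: with $n=5$, $q=5$, a Hamiltonian cord $1\to2\to3\to4\to5$ at $k=0$ followed by the cord $1\to2\to3\to5$ at $k=1$ leaves $\mathcal{K}_4(2)=\{d_3,d_4\}\subseteq\mathcal{K}_5(2)=\{d_2,d_3,d_4,d_5\}$. (The lemma survives there only because $q$ is already ahead of the bound, so the correct invariant must be conditioned on how far ahead $q$ is --- precisely the bookkeeping you defer.)

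The paper avoids this difficulty entirely by not arguing forward from $q$'s perspective at all. It encodes each graph $G(k)$ as a nonnegative matrix $M(k)$, observes via $(\ref{mm2})$ that collection at $q$ corresponds to positivity of a column of $\prod_r M(r)$, and then obtains Lemma $\ref{lem1}$ from Lemma $\ref{lem2}$ by transposing and time-reversing the matrix sequence: the input-cord conditions $(\ref{clem1})$ are exactly the reversed transposes of the output-cord conditions $(\ref{clem2})$. This works because the forward induction for \emph{dissemination} is easy --- one tracks the single set of nodes holding $d_w$, and a pigeonhole argument on the output-cord of length $k+1$ versus the $k$ already-informed nodes closes the step --- whereas the forward induction for \emph{collection}, which you attempted, requires controlling many nodes' knowledge sets simultaneously. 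If you want to keep a direct argument, you should either prove and carry a stronger multi-node invariant, or adopt the paper's duality and derive the result from the dissemination lemma.
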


Given $(\ref{clem1})$, the Lem.$\ref{lem1}$ implies that after $(n-1)$ iterations the node $q$ will know all data in the network, that is $\mathcal{K}_q(n-1) = \{ d_j \ : \ j \in \mathcal{V} \}$, thus solving CP (cf. Def.$\ref{def1}$).

However, the Lem.$\ref{lem1}$ states a stronger property of $\mathcal{K}_q(k)$ than simply $\mathcal{K}_q(n-1) = \{ d_j \ : \ j \in \mathcal{V} \}$. We will utilize Lem.$\ref{lem1}$ more completely to illustrate how the update protocol $(\ref{flood})$ permits distributive computation of the network size $n$. To do so, we require the following definition.

\begin{defn}\label{defcycle}  $\chi (k)$-cycle: if node $i$ has a closed input-cord $\mathcal{\hat{I}}^i(k)$ with cardinality greater than $(\chi - 2)$, then node $i$ is contained in a $\chi(k)$-cycle. The graph $G(k)$ contains $\chi(k)$ if all nodes $i \in 
\mathcal{V}$ are contained in a $\chi(k)$-cycle. \end{defn}

An illustration of a graph that contains $3(4)$ is illustrated in Fig.$\ref{fig5}$.

\begin{figure}[htb] \center \includegraphics[width=0.9 \linewidth]{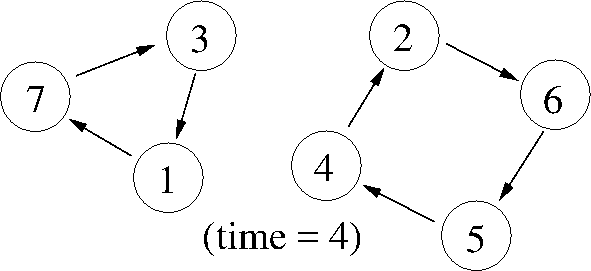} \caption{A graph $G(4)$ that contains $3(4)$. Note that each node $i \in \mathcal{V}$ has a closed input-cord with size greater than $1$.} \label{fig5} \end{figure}

Given the above definition, we now apply Lem.$\ref{lem1}$ to all nodes $i \in \mathcal{V}$, rather than just to node $q$, and obtain the following solution to KP (cf. Def.$\ref{def3}$).

\begin{thrm}\label{thrm1} Define $\psi(k)$ as follows, \begin{equation}\label{psi} \psi (k) = \begin{cases}   n & \text{if } k \leq  \lceil n/2 \rceil - 1 \\   n -  k    & \text{if } k \geq \lceil n/2 \rceil   \end{cases} \end{equation}
 
\noindent  Given the update $(\ref{flood})$, if $\psi(k) \in G(k)$ (cf. Def.$\ref{defcycle}$) for $k \in \{ 0, 1, \ldots, n-2 \}$ then $| \mathcal{K}_i(k+1) | \geq (k+2)$ for all $i \in \mathcal{V}$. \end{thrm}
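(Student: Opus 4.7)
The plan is to reduce Theorem~\ref{thrm1} to a node-by-node application of Lemma~\ref{lem1}. I would proceed in three short steps, of which none is genuinely hard; the content is all in Lemma~\ref{lem1}, and the theorem is essentially its simultaneous deployment at every vertex.

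First, I would unpack what $\psi(k)\in G(k)$ actually asserts. By Definition~\ref{defcycle}, the condition $\psi(k)\in G(k)$ says that every node $i\in\mathcal{V}$ is contained in a $\psi(k)$-cycle, i.e.\ admits a closed input-cord $\hat{\mathcal{I}}^i(k)$ with $|\hat{\mathcal{I}}^i(k)|>\psi(k)-2$, equivalently $|\hat{\mathcal{I}}^i(k)|\geq \psi(k)-1$. Crucially, a closed input-cord is \emph{a fortiori} an input-cord in the sense of (\ref{ic}): closure is an additional edge requirement $(i,\hat{\mathcal{I}}^i_1(k))\in\mathcal{E}(k)$ that is \emph{not} used in the hypothesis of Lemma~\ref{lem1}. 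So every node $i$ has, at each $k\in\{0,1,\ldots,n-2\}$, a (not-necessarily-closed) input-cord of cardinality at least $\psi(k)-1$.

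Second, I would verify that the piecewise bound $\psi(k)-1$ dominates the Lemma~\ref{lem1} threshold $n-k-1$ at every $k$. This is immediate from the definition (\ref{psi}): for $k\leq\lceil n/2\rceil-1$, $\psi(k)-1 = n-1\geq n-k-1$ since $k\geq 0$; and for $k\geq\lceil n/2\rceil$, $\psi(k)-1 = n-k-1$, which meets the threshold with equality. Hence for every $i\in\mathcal{V}$ and every $k\in\{0,1,\ldots,n-2\}$, there is an input-cord to $i$ of cardinality at least $n-k-1$.

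Third, I would fix an arbitrary $i\in\mathcal{V}$, set $q:=i$ in Lemma~\ref{lem1}, and feed it the sequence $\{\hat{\mathcal{I}}^i(k)\}_{k=0}^{n-2}$ produced in the first step. The cardinality condition (\ref{clem1}) is satisfied by the second step, so Lemma~\ref{lem1} applies and yields $|\mathcal{K}_i(k+1)|\geq k+2$. Since $i$ was arbitrary, the conclusion holds for every $i\in\mathcal{V}$.

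The only place I would expect the reader to slow down is in seeing why the piecewise definition of $\psi(k)$ is forced rather than ad hoc: on the early interval $k<\lceil n/2\rceil$ the Lemma~\ref{lem1} threshold $n-k-1$ strictly exceeds $\lceil n/2\rceil$, so a symmetric ``$\psi(k)=n-k$'' requirement across \emph{all} nodes would be insufficient for the first $\lceil n/2\rceil$ iterations, which is precisely why $\psi(k)$ is capped at $n$ there. Apart from this small bookkeeping remark, there is no real obstacle: the theorem is a clean lift of Lemma~\ref{lem1} from a single designated node $q$ to every node in $\mathcal{V}$, made possible by the uniform connectivity requirement encoded in $\psi(k)\in G(k)$.
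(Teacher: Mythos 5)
Your proof is correct and follows essentially the same route as the paper's own: unpack $\psi(k)\in G(k)$ into the existence, at every node and every $k$, of an input-cord of cardinality at least $\psi(k)-1\geq n-k-1$, and then invoke Lemma~\ref{lem1} with $q:=i$ for each $i\in\mathcal{V}$. (The only questionable bit is your closing aside: by your own Step 2, a condition $\psi(k)=n-k$ for all $k$ would already yield cords of cardinality $n-k-1$ and hence satisfy (\ref{clem1}), so that remark does not actually explain the cap at $n$ on the early interval --- but it is commentary, not part of the argument, and does not affect the validity of the proof.)
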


The next corollary shows that Thm.$\ref{thrm1}$ allows for distributed computation of the network size $n$, and thus solves TP (cf. Def.$\ref{def4}$).

\begin{cor}\label{cor1} Given the update $(\ref{flood})$, if $\psi(k) \in G(k)$  (cf. $(\ref{psi})$)  for $k \in \{ 0, 1, \ldots, n-2 \}$ then at time $k = n$ all nodes $i \in \mathcal{V}$ will know the size of the network $|\mathcal{K}_i(n)| = k = |\mathcal{V}| = n$.   \end{cor}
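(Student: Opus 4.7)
The plan is to combine Theorem~\ref{thrm1} with the trivial bound $|\mathcal{K}_i(k)| \leq n$ to obtain a saturation identity at time $n$, and then exhibit a simple local criterion---comparing the size of the knowledge set to the iteration index---that lets each node detect the moment saturation has occurred.

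First I would apply Theorem~\ref{thrm1} directly: under the hypothesis $\psi(k) \in G(k)$ for $k \in \{0,1,\ldots,n-2\}$, the theorem gives $|\mathcal{K}_i(k+1)| \geq k+2$ for every $i \in \mathcal{V}$ and every such $k$. After a shift of index this reads $|\mathcal{K}_i(\ell)| \geq \ell+1$ for $\ell \in \{1,2,\ldots,n-1\}$. At $\ell = n-1$ this yields $|\mathcal{K}_i(n-1)| \geq n$, and because $|\mathcal{K}_i(\ell)| \leq |\mathcal{V}| = n$ for all $\ell$, we must have $|\mathcal{K}_i(n-1)| = n$. The flooding rule $(\ref{flood})$ is monotone in the knowledge sets, so $|\mathcal{K}_i(n)| \geq |\mathcal{K}_i(n-1)| = n$; combined with the upper bound this forces the saturation identity $|\mathcal{K}_i(n)| = n$ claimed by the corollary.

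Next I would argue that at time $k = n$ each node can in fact \emph{recognize} that the integer $|\mathcal{K}_i(n)|$ it currently holds equals the (unknown) network size. The detection rule is the purely local comparison ``$|\mathcal{K}_i(k)| \overset{?}{=} k$''. For $\ell \in \{1,2,\ldots,n-1\}$, the lower bound established above gives $|\mathcal{K}_i(\ell)| \geq \ell + 1 > \ell$, so equality fails strictly at every iteration prior to $n$. At $\ell = n$, the preceding paragraph shows $|\mathcal{K}_i(n)| = n = \ell$, and equality holds for the first time. Hence the earliest $\ell$ at which $|\mathcal{K}_i(\ell)| = \ell$ is $\ell = n$, and each node---using only the two locally available quantities $k$ and $|\mathcal{K}_i(k)|$---can declare the value $n$ at exactly that moment.

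I expect the main obstacle not to be any hidden combinatorial argument (the heavy lifting has already been absorbed into Theorem~\ref{thrm1}) but rather the conceptual step of confirming that the detection test is genuinely local: a node must be able to announce $n$ without any auxiliary information about neighbor identities or global topology. Once one phrases the criterion as ``the first iteration $k$ at which $|\mathcal{K}_i(k)|$ ceases to exceed $k$,'' the corollary follows immediately from the two strict/equality inequalities just verified.
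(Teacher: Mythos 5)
Your proposal is correct and follows essentially the same route as the paper: invoke Theorem~\ref{thrm1} for the lower bound $|\mathcal{K}_i(\ell)|\geq \ell+1$, cap it with $|\mathcal{K}_i(\ell)|\leq n$, and let each node detect the network size as the first iteration at which the knowledge-set cardinality no longer exceeds the time index. Your write-up is in fact slightly more careful than the paper's (which loosely states the bound as $|\mathcal{K}_i(k)|>k+1$), but the argument is the same.
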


The connectivity condition defined in Thm.$\ref{thrm1}$ thus not only guarantees FKS by time $(n-1)$, that is $\mathcal{K}_i(n-1) = \{ d_j \ : \ j \in \mathcal{V} \}$ for all $i \in \mathcal{V}$, but also permits distributed computation of the network size \footnote{ Note that, although the connectivity condition in Thm.$\ref{thrm1}$ depends functionally on the network size $n$, it is not required that any node specifically knows the network size.}.

We now move on to the dissemination problem (cf. Def.$\ref{def2}$). Although the connectivity conditions that solve DP are time-symmetric to the conditions that solve CP, when they are applied to the entire network the distributed nature of the dissemination problem does not permit computation of the network size. However, when combining the solution to CP (Lem.$\ref{lem1}$) and the solution to DP presented next (Lem.$\ref{lem2}$), we obtain a solution to KP that assumes weaker connectivity conditions than  Thm.$\ref{thrm1}$. However the Cor.$\ref{cor1}$ cannot be similarly improved upon (see Conj.$\ref{conj1})$.

\subsection{Network Connectivity Definitions: DP, KP, TP}\label{sec:mainintro2}

Analogous to the notion of an ``input-cord", we next define an ``output-cord" from node $i$ as an ordered set of nodes $\mathcal{O} ^i (k) \subseteq \mathcal{V}_{-i}$ with the following properties, \begin{equation}\label{oc} \begin{array}{llll} &  \mbox{(a)} \ \  \mathcal{O}^i _j(k) \neq \mathcal{O}^i _r(k) \  , \ \forall \ r \neq j  \\  &   \mbox{(b)} \ \   ( \mathcal{O}^i _{j+1} (k) , \mathcal{O} ^i _{j}(k) ) \in \mathcal{E}(k)  \ , \\ & \ \ \ \ \ \ \ \ \forall \ j \in \{ 1, 2, \ldots, | \mathcal{O}^i (k) | - 1 \}  \\  &   \mbox{(c)} \ \   (   i ,  \mathcal{O}^i_ {| \mathcal{O}^i(k) |} (k)  ) \in \mathcal{E}(k)  \end{array}  \end{equation} where $\mathcal{O}^i_j(k)$ denotes the $j^{th}$ entry in $\mathcal{O}^i(k)$, and $| \mathcal{O}^i(k)|$ is the cardinality of $\mathcal{O}^i(k)$.  An example of an output-cord from node $6$ is illustrated in Fig.$\ref{fig3}$. 

\begin{figure}[htb] \center \includegraphics[width=0.9 \linewidth]{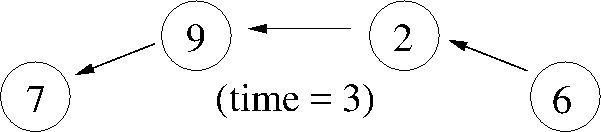} \caption{An output-cord from node $6$, $\mathcal{O}^6(3) = \{ 7,9,2 \}$.} \label{fig3} \end{figure}

Our next result provides sufficient connectivity conditions to solve DP (cf. Def.$\ref{def2}$).

\begin{lem}\label{lem2} Given the update $(\ref{flood})$, if the sequence of ouput-cords from node $\mbox{w}$ satisfies,  \begin{equation}\label{clem2} | \mathcal{O}^w( k ) | \geq (k+1)  \ , \ \forall \ k \in \{ 0,1, \ldots, n-2 \}  \end{equation}  then $ |  \{  \mathcal{K}_i  (k+1)  \supseteq  d_w \ : \ i \in \mathcal{V} \} | \geq (k + 2)$. \end{lem}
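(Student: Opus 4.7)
The plan is to prove Lemma \ref{lem2} by induction on $k$, tracking the cardinality of the set $S_k := \{i \in \mathcal{V} : d_w \in \mathcal{K}_i(k)\}$ of nodes that already hold $d_w$ at time $k$. Since $\mathcal{K}_w(0) = \{d_w\}$ and $(\ref{flood})$ makes $\mathcal{K}_w(\cdot)$ monotone, we always have $w \in S_k$. The goal is to show $|S_{k+1}| \geq k+2$ under the standing hypothesis $|\mathcal{O}^w(k)| \geq k+1$. The base case $k=0$ follows immediately: $|\mathcal{O}^w(0)| \geq 1$ supplies at least one $u \neq w$ with $(w,u) \in \mathcal{E}(0)$, and by $(\ref{flood})$ this $u$ lies in $S_1$, giving $|S_1| \geq 2$.

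For the inductive step, I would assume $|S_k| \geq k+1$ and enumerate the cord as $\mathcal{O}^w(k) = (u_1, \ldots, u_L)$ with $L \geq k+1$. By $(\ref{oc})$, the sequence $w, u_L, u_{L-1}, \ldots, u_1$ is a directed data-flow path in $\mathcal{E}(k)$ with pairwise distinct vertices (since $u_j \in \mathcal{V}_{-w}$ and the $u_j$ are distinct). I would then split into two cases. In the \emph{saturated} case, where every $u_j$ already lies in $S_k$, the $L+1 \geq k+2$ distinct nodes $\{w\} \cup \{u_1, \ldots, u_L\}$ all belong to $S_k$, so monotonicity yields $|S_{k+1}| \geq |S_k| \geq k+2$. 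In the \emph{frontier} case, at least one $u_j$ is missing from $S_k$; walking along the data-flow path starting at $w \in S_k$, I locate the first crossing edge, say $(u_{j+1}, u_j)$ or $(w, u_L)$, whose tail lies in $S_k$ and whose head does not. Because $d_w$ is in the tail's knowledge at time $k$ and this edge is in $\mathcal{E}(k)$, $(\ref{flood})$ places the head into $S_{k+1}$, so $|S_{k+1}| \geq |S_k| + 1 \geq k+2$.

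The main obstacle, and the conceptual trap to avoid, is trying to use the cord for single-step multi-hop propagation of $d_w$: the update $(\ref{flood})$ only moves data one hop per iteration, so attempting to push $d_w$ from $w$ all the way down to $u_1$ in one time step will fail. The right viewpoint is to treat the cord length purely as a combinatorial resource—either it is long enough to \emph{witness} that $S_k$ is already saturated at size $\geq k+2$, or it must \emph{exit} $S_k$ along some edge and thereby guarantee one new node acquires $d_w$. Once this dichotomy is in place, distinctness of cord entries (from $(\ref{oc})$) and monotonicity of the knowledge sets (from $(\ref{flood})$) handle the remaining bookkeeping automatically.
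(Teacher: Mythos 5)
Your proposal is correct and follows essentially the same route as the paper's proof: an induction tracking the set of nodes holding $d_w$, with the observation that a cord of length $k+1$ together with $w$ comprises $k+2$ distinct nodes, so either the set is already saturated at size $k+2$ or some cord node lacks $d_w$ and the first crossing edge delivers it via $(\ref{flood})$. Your write-up merely makes explicit (via the first-crossing-edge argument) what the paper illustrates informally with the $2_1 = 2_0$ case analysis at $k=1$.
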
 

Given $(\ref{clem2})$, the Lem.$\ref{lem2}$ implies that after $(n-1)$ iterations all nodes in the network will know the data $d_w$, that is $ | \{ \mathcal{K}_i(n-1)  \supseteq  d_w \ : \ i \in \mathcal{V} \} | = n$, thus solving DP (cf. Def.$\ref{def2}$). Applying Lem.$\ref{lem2}$ to all nodes $i \in \mathcal{V}$, rather than just node $w$, results in our next theorem, which is an alternative to Thm.$\ref{thrm1}$ as a solution to KP (cf. Def.$\ref{def3})$.



\begin{thrm}\label{thrm2} Define $\nu(k)$ as follows, \begin{equation}\label{nu} \nu (k) = \begin{cases}  k+2   & \text{if } k \leq  \lceil n/2 \rceil - 1 \\   n    & \text{if } k \geq \lceil n/2 \rceil  \end{cases} \end{equation}

\noindent  Given the update $(\ref{flood})$, if $\nu (k) \in G(k)$ (cf. Def.$\ref{defcycle}$) for $k \in \{ 0, 1, \ldots, n-2 \}$ then $| \{ \mathcal{K}_i(k+1)  \supseteq d_w  \ : \ i \in \mathcal{V} \} | \geq (k+2)$ for all $w \in \mathcal{V}$. \end{thrm}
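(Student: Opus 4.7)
The plan is to reduce Theorem~\ref{thrm2} to Lemma~\ref{lem2} applied separately to each source node $w \in \mathcal{V}$. The only real content is to translate the hypothesis $\nu(k) \in G(k)$, which is phrased in terms of closed \emph{input}-cords (cf. Def.~\ref{defcycle}), into the \emph{output}-cord hypothesis $(\ref{clem2})$ required by Lem.~\ref{lem2}.

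To do that translation I would first record the following elementary bijection. A closed input-cord $\hat{\mathcal{I}}^w(k) = (\mathcal{I}^w_1,\ldots,\mathcal{I}^w_m)$ is, by $(\ref{ic})$ together with the closure edge $(w,\mathcal{I}^w_1) \in \mathcal{E}(k)$, exactly a directed cycle
\[
w \;\to\; \mathcal{I}^w_1 \;\to\; \mathcal{I}^w_2 \;\to\; \cdots \;\to\; \mathcal{I}^w_m \;\to\; w
\]
in $G(k)$. Reading this cycle in the reverse order around $w$, i.e.\ setting $\mathcal{O}^w_j := \mathcal{I}^w_{m-j+1}$ for $j = 1,\ldots,m$, gives an ordered tuple that satisfies conditions (a), (b), (c) of $(\ref{oc})$: distinctness is inherited, the edges $\mathcal{I}^w_{j+1} \to \mathcal{I}^w_j$ become the required $\mathcal{O}^w_{j+1} \to \mathcal{O}^w_j$ edges after reindexing, and the closure edge $w \to \mathcal{I}^w_1 = \mathcal{O}^w_m$ supplies (c). Hence every closed input-cord at $w$ of cardinality $m$ yields an output-cord from $w$ of cardinality $m$.

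With that in hand, the rest is bookkeeping. The hypothesis $\nu(k) \in G(k)$ gives every $w \in \mathcal{V}$ a closed input-cord of cardinality strictly greater than $\nu(k)-2$, and so of cardinality at least $\nu(k)-1$; by the bijection, $|\mathcal{O}^w(k)| \ge \nu(k)-1$. I would then verify case-by-case that $\nu(k)-1 \ge k+1$ for all $k \in \{0,1,\ldots,n-2\}$: for $k \le \lceil n/2 \rceil -1$, $\nu(k)-1 = k+1$; for $k \ge \lceil n/2 \rceil$, $\nu(k)-1 = n-1 \ge k+1$ since $k \le n-2$. Thus the premise $(\ref{clem2})$ of Lem.~\ref{lem2} is satisfied for $w$, yielding $|\{i \in \mathcal{V} : \mathcal{K}_i(k+1) \supseteq d_w\}| \ge k+2$. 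Since $w$ was arbitrary, Theorem~\ref{thrm2} follows.

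The main obstacle, such as it is, lies in the direction reversal: one has to be careful that the \emph{closure} edge of an input-cord is what supplies the ``first'' outgoing edge $(w,\mathcal{O}^w_{|\mathcal{O}|})$ of the output-cord, so the hypothesis must really be on \emph{closed} input-cords (as built into Def.~\ref{defcycle}) and not merely on input-cords. Everything else is a routine case check at the crossover $k = \lceil n/2 \rceil - 1$ where the definition of $\nu(k)$ changes branch, and this is exactly where the tight equality $\nu(k)-1 = k+1$ is used.
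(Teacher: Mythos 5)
Your proposal is correct and follows essentially the same route as the paper: both reduce the theorem to Lemma \ref{lem2} by checking that the $\nu(k)$-cycle hypothesis supplies output-cords of cardinality at least $k+1$ in each of the two branches of $(\ref{nu})$. The only difference is that you make explicit the cycle-reversal argument converting a closed input-cord of cardinality $m$ into an output-cord of cardinality $m$, a step the paper's proof asserts without justification, so your version is if anything slightly more complete.
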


By time $(n-1)$, note that both Thm.$\ref{thrm1}$ and Thm.$\ref{thrm2}$ guarantee FKS, that is, the knowledge set of each node $i \in \mathcal{V}$ contains all of the initial data $\{ d_j \ : \ j \in \mathcal{V}\}$. This implies that (a) the collection problem (cf. Def.$\ref{def1}$) is solved for all nodes $q \in \mathcal{V}$, and (b) the dissemination problem (cf. Def.$\ref{def2}$) is solved for all nodes $w \in \mathcal{V}$. By combining these two theorems, we obtain a significantly weaker sufficient condition for FKS by time $(n-1)$. The drawback of this result, similar to that of Thm.$\ref{thrm2}$, is that it does not permit distributed computation of the network size (in the Appendix this is conjectured, as we foresee no way in which it can or cannot be proven).

\begin{thrm}\label{thrm3} Let $\eta(k) = \mathrm{min} \{ \psi(k), \nu(k) \}$. Given the update $(\ref{flood})$, if $\eta (k) \in G(k)$ (cf. Def.$\ref{defcycle}$) for $k \in \{ 0, 1, \ldots, n-2 \}$ then $ \sum_{i = 1}^n |  \mathcal{K}_i(n-1) | = n^2$. \end{thrm}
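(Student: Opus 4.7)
The plan is to combine the arguments behind Thm.$\ref{thrm1}$ and Thm.$\ref{thrm2}$ by tracking each datum $d_w$ individually. Define $S_w(k) := \{i \in \mathcal{V} : d_w \in \mathcal{K}_i(k)\}$; since $\sum_{i} |\mathcal{K}_i(n-1)| = \sum_{w} |S_w(n-1)|$ by double counting the pairs $(w,i)$ with $d_w \in \mathcal{K}_i(n-1)$, it suffices to show $|S_w(n-1)| = n$ for every $w \in \mathcal{V}$. I will prove by induction on $k$ the invariant $|S_w(k+1)| \geq k+2$, starting from $|S_w(0)| = 1$, splitting the inductive step according to which branch of $\eta$ is active.

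A preliminary observation is that a closed input-cord at $i$ forms a directed cycle through $i$, which simultaneously provides an output-cord from $i$ and an input-cord to $i$ of the same size. Thus $\eta(k) \in G(k)$ supplies at every node both an output-cord and an input-cord of size at least $\eta(k) - 1$. In the dissemination phase $k \leq \lceil n/2 \rceil - 1$ this gives $|\mathcal{O}^w(k)| \geq k+1$, so the inductive step that drives Lem.$\ref{lem2}$ applies: letting $r$ be the largest index with $\mathcal{O}^w_r(k) \notin S_w(k)$, the cord edge $(\mathcal{O}^w_{r+1}(k), \mathcal{O}^w_r(k))$ --- or $(w, \mathcal{O}^w_m(k))$ when $r$ is terminal --- propagates $d_w$ to the previously uninformed $\mathcal{O}^w_r(k)$, delivering $|S_w(k+1)| \geq |S_w(k)| + 1 \geq k+2$. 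In the collection phase $k \geq \lceil n/2 \rceil$ we instead have $|\mathcal{I}^i(k)| \geq n-k-1$ at every $i$; the induction hypothesis $|S_w(k)| \geq k+1$ forces $|\mathcal{V} \setminus S_w(k)| \leq n-k-1$, so for any $i \notin S_w(k)$ the cord $\mathcal{I}^i(k) \subseteq \mathcal{V}_{-i}$ cannot fit inside $(\mathcal{V} \setminus S_w(k)) \setminus \{i\}$ (of size $\leq n-k-2$). Hence $\mathcal{I}^i(k) \cap S_w(k) \neq \emptyset$; choosing the largest index $r$ with $\mathcal{I}^i_r(k) \in S_w(k)$, the edge $(\mathcal{I}^i_r(k), \mathcal{I}^i_{r+1}(k))$ --- or $(\mathcal{I}^i_m(k), i)$ when $r$ is terminal --- transmits $d_w$ into a previously uninformed node, and $|S_w(k+1)| \geq k+2$ holds in this phase too.

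The key technical point is the seam between the two phases: I must check that the dissemination-phase step delivers $|S_w(\lceil n/2 \rceil)| \geq \lceil n/2 \rceil + 1$ and that the pigeonhole margin at $k = \lceil n/2 \rceil$ (where $|\mathcal{I}^i|$ exceeds $|(\mathcal{V} \setminus S_w) \setminus \{i\}|$ by only one) is still sufficient to launch the collection-phase induction. Both reduce to direct arithmetic in $\eta(k)$. Once the induction is verified to chain across $k = 0, \ldots, n-2$ we get $|S_w(n-1)| = n$ for every $w$, and $\sum_{i} |\mathcal{K}_i(n-1)| = n^2$ follows.
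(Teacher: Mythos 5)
Your proof is correct, and it takes a genuinely different route from the paper's in the second (collection) phase. The paper likewise opens by applying Lem.$\ref{lem2}$ on $k \leq \lceil n/2 \rceil - 1$ to conclude that each datum is held by at least $\lceil n/2 \rceil + 1$ nodes at the midpoint (its inequality $(\ref{zero})$), but it then finishes with a matrix ``meet-in-the-middle'' argument: it identifies the second-half product $\mathcal{\tilde{M}}(n-1)$ with the transpose of a first-half product, notes that the $(i,j)$ entry of the indicator of the composite is $\sum_{\ell} \mathcal{I}\big( \mathcal{M}_{i\ell} \mathcal{M}_{j\ell} \big)$, and invokes the fact that two subsets of $\mathcal{V}$ each of size exceeding $n/2$ must intersect, so every entry is positive. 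You instead carry the single per-datum invariant $|S_w(k+1)| \geq k+2$ straight through both phases, handling the collection phase with a fresh pigeonhole: an uninformed node's input-cord of length at least $n-k-1$ cannot avoid the informed set of size at least $k+1$, and the last informed node on that cord necessarily feeds an uninformed one. This buys a strictly stronger intermediate statement (monotone growth of every $S_w$ at every iteration, not merely the final count $n^2$) and sidesteps the paper's identification $\mathcal{\tilde{M}}(n-1)' = \mathcal{M}(\lceil n/2 \rceil)$, which as written presupposes that the second-half graphs are edge-reversals of the first-half ones rather than arbitrary graphs satisfying the $\eta(k)$-cycle condition; your argument applies verbatim to any such graphs. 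The price is that you must supply (and you do, correctly) the observation that a closed input-cord is a directed cycle and hence also furnishes an output-cord of the same length, so that $\eta(k) \in G(k)$ really does provide the output-cords needed in the dissemination phase; and your seam check at $k = \lceil n/2 \rceil$ is exactly the right thing to verify, with the arithmetic working out to a margin of one, which suffices for a nonempty intersection.
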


The connectivity conditions for Thm.$\ref{thrm1}, \ref{thrm2}$ and $\ref{thrm3}$ are plotted in Fig.$\ref{fig6}$. Note that $\eta(k)$ is significantly lower than $\psi(k)$ for $k \in \{ 0,1, \ldots, ( n/2 )- 2 \}$, and significantly lower than $\nu(k)$ for $k \in \{  n/2  , ( n/2 ) + 1, \ldots, n-2 \}$. At $k = (n/2) - 1$ Thm.$\ref{thrm3}$ requires $n(k) \in G(k)$, thus all nodes are contained in each other node's input-cord. This condition, known as a ``connected" network graph, is illustrated in Fig.$\ref{fig7}$.

 \begin{figure}[htb] \center \includegraphics[width=1 \linewidth]{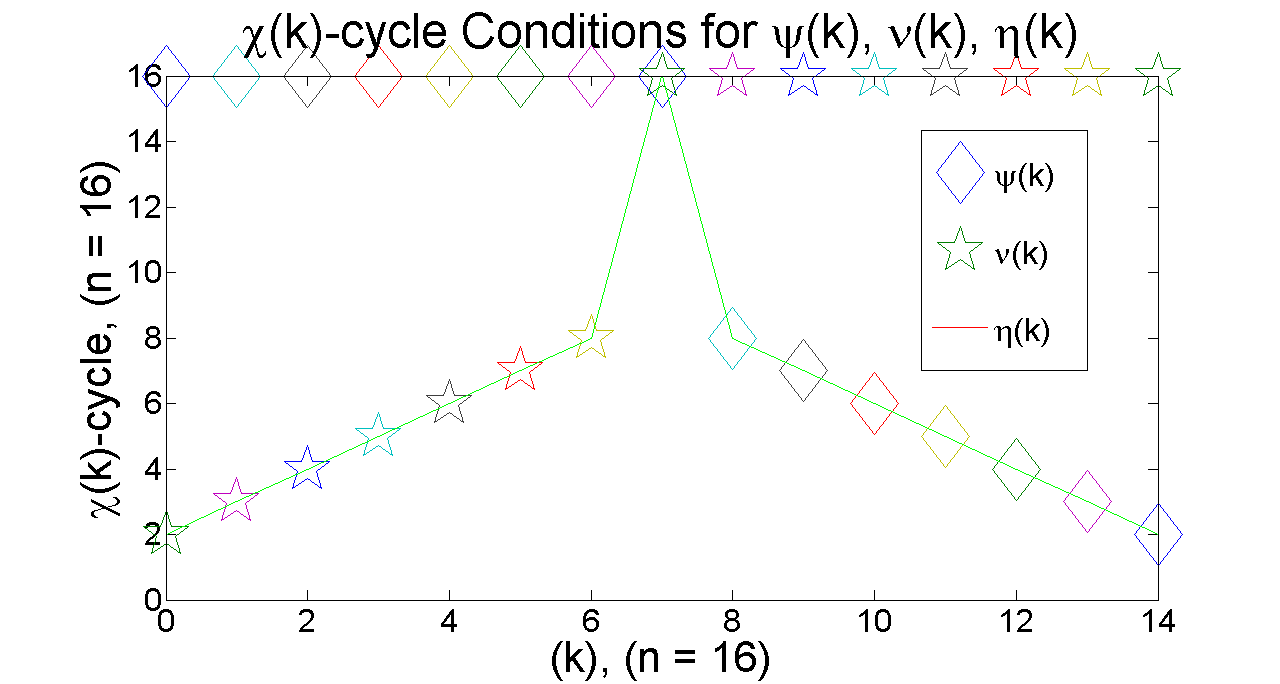} \caption{The $\chi(k)$-cycle conditions for Thm.$\ref{thrm1}, \ref{thrm2},\ref{thrm3}$, plotted respectively as $\psi(k), \nu(k) , \eta(k)$.}\label{fig6} \end{figure}

\begin{figure}[htb] \center \includegraphics[width=0.9 \linewidth]{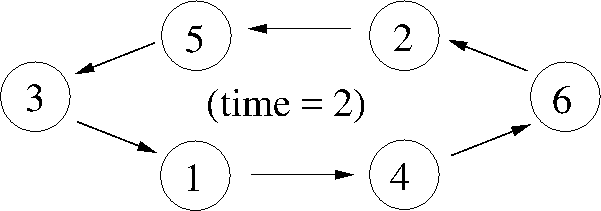} \caption{A connected digraph, $n(2) = 6(2) \in G(2)$.} \label{fig7} \end{figure}

\section{Main Results: Probabilistic Communication}\label{sec:main2}

For a fixed network size $n$, define the set of permutations of $\mathcal{V} = \{ 1, 2, \ldots, n \}$ as $\text{p} ( \mathcal{V} )$. The set  $\text{p} ( \mathcal{V} )$ has cardinality $n!$. Denote the $\ell^{th}$ element of $\text{p} ( \mathcal{V} )$ as $\text{p} ^ \ell ( \mathcal{V} )$, and $j^{th}$ element of $\text{p} ^ \ell ( \mathcal{V} )$, as $\text{p} ^ \ell _ j ( \mathcal{V} )$. At time $k$ let $r(k)$ be a random integer from the set $\{ 1 , 2, \ldots, n! \}$. We define a random connected graph on $n$ vertices as a digraph $\tilde{G}(k) = \{ \mathcal{V} , \tilde{\mathcal{E}}(k) \}$  with edge set $\mathcal{\tilde{E}}(k)$,   \begin{equation}\label{probg} \begin{array}{llll} &  ( \text{p} ^{r(k)} _ j ( \mathcal{V} ) , \text{p} ^{r(k)} _{j+1} ( \mathcal{V} )) \in \tilde{\mathcal{E}}(k) \  , \ \forall \  j \in \mathcal{V}_{-n}   \\ &   ( \text{p} ^{r(k)} _ n ( \mathcal{V} ) , \text{p} ^{r(k)} _1 ( \mathcal{V} )) \in \tilde{\mathcal{E}}(k)  \  . \end{array} \end{equation} Note that $n(k) \in \tilde{G}(k)$ (cf. Def.$\ref{defcycle}$).

Let $\mathcal{R}(k)$ equal a sequence of $k$ random graphs, $\mathcal{R}(k) = \{  \tilde{G}(0) , \tilde{G}(1) , \ldots, \tilde{G}(k-1) \}$. If we define $\mathbb{E}( \cdot)$ as the expectation operator, then the expected time $\hat{k}(n)$ at which the knowledge set of all nodes equals $\{ d_i   \ : \ i \in \mathcal{V} \}$ is, \begin{equation}\label{hatk} \hat{k}(n) = \text{min} \Big\{ k \in \mathbb{R} \ : \ \mathbb{E} \big( \sum_{i = 1}^n | \mathcal{K}_i (k) |  \big)  = n^2    \Big\} \end{equation} This is the expected time at which a network with $n$ nodes will reach FKS.

Define $\phi(n)$ as, \begin{equation}\label{phidef}  \phi(n) = - \text{log} \Big( 1 - \frac{\text{log}(n-2)}{\text{log}(n)} \Big) \ ,   \end{equation} assuming all logarithms are base $2$. We will upper and lower bound $\hat{k}(n)$ by a parameter $\hat{\gamma}$, defined as, \begin{equation}\label{nudef}  \hat{\gamma} = \text{min} \Big\{ \gamma \in \mathbb{N} \ : \   \gamma + \phi(n) \leq \hat{k}(n)  \Big\}  \ .    \end{equation} Note that the condition $\gamma + \phi(n) \leq \hat{k}(n)$ states that if $k \geq \hat{k}(n)$ then it is expected that at time $k$ the network has reached FKS. In this sense, it is possible for $\gamma + \phi(n) \leq \hat{k}(n)$ and $\gamma - 1 + \phi(n) \nleqslant \hat{k}(n)$ to both hold without contradiction. The following theorem places a tight $\Theta (\text{log}(n))$ bound on $\hat{k}(n)$.

\begin{thrm}\label{probthm}  Given the update $(\ref{flood})$ and a sequence of $(n-1)$ random graphs $\mathcal{R}(n-1)$, the solution to $(\ref{nudef})$ is $\hat{\gamma} = 2$.   \end{thrm}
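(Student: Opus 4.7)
The approach I would take is to exploit the strong symmetry of the model: each $\tilde{G}(k)$ is a uniformly random directed Hamiltonian cycle, so under symmetric initial data the evolution is invariant under relabeling of $\mathcal{V}$. Fix a source $w$ and let $S(k) := \{i : d_w \in \mathcal{K}_i(k)\}$. The flooding rule $(\ref{flood})$ gives $S(k+1) = S(k) \cup \sigma_k(S(k))$ where $\sigma_k$ denotes the random cycle permutation at time $k$. By symmetry over the choice of $w$, the condition defining $\hat{k}(n)$ in $(\ref{hatk})$ is equivalent to $\mathbb{E}[\bar s(k)] = 0$ for every $w$, where $\bar s(k) := n - |S(k)|$.

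First I would derive a one-step recursion for $\bar s(k)$. For each $v \notin S(k)$ the predecessor $\sigma_k^{-1}(v)$ is uniform on $\mathcal{V} \setminus \{v\}$, so $\mathbb{P}(\sigma_k^{-1}(v) \notin S(k) \mid S(k)) = (\bar s(k) - 1)/(n-1)$, and summing over $v$ yields
\[
\mathbb{E}\big[\bar s(k+1) \,\big|\, S(k)\big] = \frac{\bar s(k)\big(\bar s(k) - 1\big)}{n-1}.
\]
Setting $m_k := \mathbb{E}[\bar s(k)]$, Jensen on the convex map $x \mapsto x(x-1)$ gives $m_{k+1} \geq m_k(m_k - 1)/(n-1)$. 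I would then iterate this recursion from $m_0 = n-1$ through two regimes: a \emph{doubling} phase in which $n - m_k \approx 2^k$ while $m_k \gtrsim n/2$, followed by a \emph{squaring} phase where $m_k/n$ is approximately squared each step. Inverting the second regime, the threshold $m_k \lesssim 1$ requires $(1 - 1/n)^{2^k} \lesssim 1/n$, which after rearrangement is exactly $k \geq \phi(n) + 2$; this gives the lower bound needed to prove $\hat\gamma \geq 2$.

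For the matching upper bound on $\hat{k}(n)$, I would introduce the pair-exclusion probability $q_2(k) := \mathbb{P}(u, v \notin S(k))$ for distinct $u, v \in \mathcal{V} \setminus \{w\}$ and derive the exact identity $q(k+1) = (n-2) q_2(k)/(n-1)$ with $q(k) := m_k/(n-1)$. Counting the random Hamiltonian cycles extending two prescribed directed edges yields the estimate $q_2(k) \leq q(k)^2 (1 + O(1/n))$, which upon iteration forces $q(k) = 0$ exactly one step past the lower-bound threshold. A union bound over the $n$ choices of $w$ combined with Markov then gives $\hat{k}(n) < \phi(n) + 3$, ruling out $\hat\gamma \geq 3$.

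The main obstacle is the upper-bound direction. Jensen aligns with the flow of the recursion and delivers the lower bound cleanly, but the upper bound needs genuine second-moment control on $\bar s(k)$. The random Hamiltonian cycle couples the predecessors of distinct vertices — they must fit into one global cycle rather than be drawn independently — so the correlation term $q_2(k) - q(k)^2$ has to be estimated sharply enough to absorb into the subleading correction already present inside $\phi(n)$, yielding upper and lower bounds separated by exactly one integer. Controlling this correlation at the right logarithmic scale is where the real technical work lies.
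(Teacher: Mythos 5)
Your skeleton is the same as the paper's: reduce by symmetry to a single scalar (the probability that a fixed node misses a fixed datum), write a one-step recursion for it, iterate into a doubly-exponential decay of the form $\big(\tfrac{n-2}{n}\big)^{2^{k-1}}$, and solve for the threshold $2+\phi(n)$. Indeed your identity $\mathbb{E}[\bar s(k+1)\mid S(k)]=\bar s(k)(\bar s(k)-1)/(n-1)$ is exactly the paper's recursion $x_{k0}=x_{(k-1)0}\big(\tfrac{n-2-E_{k}}{n-1}\big)$ once one notes that $n-2-E_{k}$ plays the role of $m_k-1$ and $x_{k0}=m_k/(n-1)$. The difference is that the paper substitutes the unconditional expectation $E_k$ for the random knowledge-set size inside the conditional probability and then treats the resulting mean-field recursion as an exact identity, whereas you keep the conditioning and observe that Jensen only yields $m_{k+1}\ge m_k(m_k-1)/(n-1)$. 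That observation is correct and is the more honest account: the Jensen direction gives the tightness half (FKS is not expected before $1+\phi(n)$, which the paper handles in Lem.~\ref{lemthm}), while the other half --- that the expected miss count actually falls below $1/n$ by $2+\phi(n)$ --- requires $\mathbb{E}[\bar s(k)^2]$ to be controlled from above, i.e. $\mathrm{Var}(\bar s(k))\lesssim \mathbb{E}[\bar s(k)]$. The paper's proof never confronts this correlation term; you correctly identify it as the crux.

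Two caveats on your plan for closing that gap. First, the estimate $q_2(k)\le q(k)^2(1+O(1/n))$ is exactly the statement $\mathrm{Var}(\bar s(k))\lesssim\mathbb{E}[\bar s(k)]$, and it does not follow from a one-step count of Hamiltonian cycles extending two prescribed edges: the positive correlation between ``$u$ misses $d_w$'' and ``$v$ misses $d_w$'' accumulates across iterations through the shared randomness of $|S(k)|$, so you need a second-moment recursion run in parallel with the first-moment one, with the variance shown to stay of order $m_k$ at every step. Second, as sketched your upper bound lands at $\phi(n)+3$ rather than $\phi(n)+2$; since the entire content of the theorem is that the window has width one integer, losing an iteration in the union-bound/Markov stage leaves $\hat\gamma$ undetermined between $2$ and $3$. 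You would need the correlation control to be sharp enough that the correction shifts the threshold by $o(1)$, not by a full step, before the proposal pins down $\hat\gamma=2$.
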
 \noindent Note that the if the parameter $\hat{\gamma}$ cannot be any smaller, than it bounds $\hat{k}(n)$ tightly from both below and above (i.e., within a single iteration). This is because network communication can only occur at integer time instances $k \in  \mathbb{N}$. We show in Lem.$\ref{lemthm}$ that $\hat{\gamma}$ cannot be reduced.

The Thm.$\ref{thrm1},\ref{thrm2},\ref{thrm3}$ all guarantee that $\sum_{i = 1}^n | \mathcal{K}_i (n-1) |   = n^2$ for any sequence of $(n-1)$ random graphs $\mathcal{R}(n-1)$. The above result shows that, \emph{in expectation}, the network will reach FKS within the exponentially smaller times $\big( \text{log}(n),  \text{log}(n^2) \big)$, see Prop.$\ref{lem1proof}$ in the Appendix.

The value $\hat{k}(n)$ defines the expected time at which the \emph{last} node in the network $\mathcal{V}$ reaches FCS, thus implying the entire network has reached FKS. We next consider $\check{k}(n)$, which will be defined as the expected time at which the \emph{first} node reaches FCS. Just as Thm.$\ref{probthm}$ can be viewed in regard to Thm.$\ref{thrm1},\ref{thrm2},\ref{thrm3}$, the theorem below can be viewed in regard to Lem.$\ref{lem1},\ref{lem2}$.

Similar to $(\ref{hatk})$, let us define $\check{k}(n)$ as follows,  \begin{equation}\label{checkk} \check{k}(n) = \text{min}_{i \in \mathcal{V}} \ \Big\{ k \in \mathbb{R} \ : \  \mathbb{E} \big(  | \mathcal{K}_i (k) |   = n  \big)     \Big\} \end{equation} The value $\check{k}(n)$ is thus the expected time at which the $\emph{first}$ node in $\mathcal{V}$ will obtain FCS \footnote{For convenience, we will, with abuse of notation, label the ``$\emph{first}$ node to obtain FCS", as the ``\emph{earliest} time of FKS". This is analogous to how the ``\emph{last} node to obtain FCS" implies (with no abuse of notation) the ``\emph{latest} time of FKS".}. Similar to $\hat{k}(n)$, we will upper and lower bound $\check{k}(n)$ by a parameter $\check{\gamma}$, defined as, \begin{equation}\label{nudef1}  \check{\gamma} = \text{min} \Big\{ \gamma \in \mathbb{N} \ : \   \gamma + \phi(n) \leq \check{k}(n)  \Big\}  \ .   \end{equation} The following theorem places a tight $\Theta (\text{log}(n))$ bound on $\check{k}(n)$.

\begin{thrm}\label{probthm1}  Given the update $(\ref{flood})$ and a sequence of $(n-1)$ random graphs $\mathcal{R}(n-1)$, the solution to $(\ref{nudef1})$ is $\check{\gamma} = 0$.  \end{thrm}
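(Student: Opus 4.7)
The plan is to follow the approach used for Thm~\ref{probthm}, specialised from the last-node to the first-node analogue. First I would read $\check{k}(n)$ as the expected value of the random hitting time $\check{T} := \min\{k : \exists\, i,\ |\mathcal{K}_i(k)| = n\}$ (the earliest time any node attains FCS), and apply the tail formula $\check{k}(n) = \mathbb{E}[\check{T}] = \sum_{k\ge 0}\Pr(\check{T}>k) = \sum_{k\ge 0}\Pr(\forall\, i,\ |\mathcal{K}_i(k)| < n)$. The symmetry of the random cycle graphs $\tilde{G}(k)$ in $(\ref{probg})$ implies that $P_k := \Pr(|\mathcal{K}_i(k)|=n)$ and $q_k := \Pr(j\notin \mathcal{K}_i(k))$ (for any $j\ne i$) do not depend on the choice of $i$ or $j\ne i$, reducing the analysis to two scalar sequences.

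Second, I would re-use the recursion for $q_k$ established in the proof of Thm~\ref{probthm}. Since the predecessor $\pi_k(i)$ of any node $i$ in $\tilde{G}(k)$ is uniform over $\mathcal{V}_{-i}$ and independent of the history, conditioning on $\pi_k(i)$ yields $q_{k+1} = \frac{n-2}{n-1}\, r_k$, where $r_k = \Pr(j\notin \mathcal{K}_i(k),\ j\notin \mathcal{K}_m(k))$ for a generic third node $m$. Modulo the correlation corrections, $r_k \approx q_k^2$, so $q_k$ decays like $((n-2)/n)^{2^k}$; the constant is calibrated precisely so that $n q_k$ crosses $1$ at $k = \phi(n)$, which is exactly why $\phi(n)$ is the critical threshold in $(\ref{phidef})$. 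Prop.~\ref{lem1proof} already provides a rigorous version of this estimate, and can be imported rather than reproved.

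Third, to obtain the lower bound $\check{k}(n)\ge \phi(n)$ (so that $\gamma = 0$ satisfies $\gamma+\phi(n)\le \check{k}(n)$), bound $\Pr(\check{T}>k) \ge 1 - n P_k$ by a union over $i$, and control $P_k$ via $q_k$ using Step~2. The threshold behaviour forces $n P_k = o(1)$ at every integer $k < \phi(n)$, so $\sum_{k<\lfloor\phi(n)\rfloor}\Pr(\check{T}>k) \ge \phi(n) - o(1)$. For the matching upper bound $\check{k}(n) < \phi(n)+1$ (needed to rule out $\gamma = 1$ and thus pin $\check{\gamma}$ to $0$ in the sense made precise by the tightness discussion following Thm~\ref{probthm}), use the complementary inequality $\Pr(\check{T}>k) \le \Pr(|\mathcal{K}_1(k)|<n) \le (n-1)q_k$; because $q_k$ is doubly-exponentially small as soon as $k \ge \phi(n)$, the tail $\sum_{k\ge\lceil\phi(n)\rceil}(n-1)q_k$ is strictly smaller than a single iteration.

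The main obstacle is controlling dependencies. The events $\{|\mathcal{K}_i(k)|=n\}_{i\in\mathcal{V}}$ are not independent across $i$, because a single random cycle $\tilde{G}(k)$ couples all predecessors at time $k$; symmetrically, the events $\{j\in\mathcal{K}_1(k)\}_{j\ne 1}$ are coupled across $j$. The union bounds above therefore appear to lose factors of $n$ which must be recovered to match the one-iteration tightness. I expect the appendix to handle this by leveraging the bookkeeping of Prop.~\ref{lem1proof}, already built for Thm~\ref{probthm}, together with the negative association inherent in the random permutation structure of $(\ref{probg})$; the clean two-iteration gap $\hat{\gamma} - \check{\gamma} = 2$ then emerges as the gap between the last-node and first-node crossings of the rapidly-transitioning $P_k$ under its doubling-exponent dynamics.
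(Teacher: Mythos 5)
There is a genuine gap, and it sits at the exact point your plan defers to hope. You read $\check{k}(n)$ as the expected hitting time $\mathbb{E}[\check{T}]$ of the event that \emph{some} node attains FCS, and you propose to evaluate it by the tail sum $\sum_k \Pr(\check{T}>k)$ with a union bound over $i$ (and, inside the recursion, over $j$). You then acknowledge that these union bounds ``lose factors of $n$ which must be recovered'' and defer the recovery to an unproven negative-association argument. That loss cannot be absorbed here: the relevant probabilities decay doubly exponentially in $k$, so a multiplicative factor of $n$ is worth roughly one extra doubling, i.e.\ one full iteration --- which is precisely the margin the theorem claims ($\check{\gamma}=0$ rather than $1$, tight to within a single iteration). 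Without an actual control of the joint law of $\{|\mathcal{K}_i(k)|\}_{i\in\mathcal{V}}$ under the coupled random cycles, the argument as sketched establishes at best $\check{k}(n)=\phi(n)+O(1)$, not the stated value of $\check{\gamma}$. You also misattribute the needed estimate: Prop.$\ref{lem1proof}$ is only the elementary inequality $\mathrm{log}(n^2)>2+\phi(n)>\mathrm{log}(n)$ and contains no recursion for $q_k$; the recursion and the bounds you need are $(\ref{rec2})$, $(\ref{b11})$ and $(\ref{n1})$ in the proof of Thm.$\ref{probthm}$.

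The paper avoids the joint-distribution problem entirely, at the cost of a weaker (purely expectation-based) reading of $(\ref{checkk})$. It works with the single scalar sequence $E_k=\mathbb{E}(|\mathcal{K}_i(k)\setminus d_i|)$, observes that since $|\mathcal{K}_i(k)|\le n$ is integer-valued the condition $E'_k>n-1$ already forces at least one node to have reached FCS, and then proves exactly two inequalities: $E_{\phi(n)}>n-2$ (using $(\ref{b11})$, $(\ref{n1})$ and Lem.$\ref{lemthm}$) for the upper bound, and $E_{\hat{k}-3}<n-2$ for tightness. There is no tail sum, no union bound, and no hitting-time expectation anywhere in that argument. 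If you want to pursue your stronger hitting-time formulation, you would need to actually prove the concentration/negative-dependence step you postulate; as written, the proposal's central step is missing rather than merely routed differently.
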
 \noindent Again, we emphasize that because network communication occurs at integer time instances $k \in  \mathbb{N}$, if the parameter $\check{\gamma}$ can be proven as the absolute minimum such value, then it bounds $\check{k}(n)$ tightly from both below and above (i.e., within a single iteration). This is indeed proven within the proof of Thm.$\ref{probthm1}$, which is presented in the Appendix (Sec.$\ref{sec:app}$).

Although Thm.$\ref{probthm}$ implies that the expected time of FKS is upper bounded by $3 + \phi(n) < \text{log}(2n^2)$ for any random sequence of connected graphs, this does not diminish the significance of Thm.$\ref{thrm1},\ref{thrm2},\ref{thrm3}$. In the Appendix, by way of Example $\ref{example}$, we show that there are sequences of connected graphs that imply $| \mathcal{K}_i(k) | = k+1$ for all $i \in \mathcal{V}$ and $k \in \{ 0 , 1, \ldots, n-1 \}$, thus yielding FKS only after $(n-1)$ iterations, just as Thm.$\ref{thrm1},\ref{thrm2},\ref{thrm3}$ guarantee. Hence, without changing the required connectivity conditions, the deterministic bounds stated in Thm.$\ref{thrm1},\ref{thrm2},\ref{thrm3}$ cannot be improved upon.

Likewise, the Thm.$\ref{probthm1}$ implies that the expected time of full data dissemination (DP) (cf. Def.$\ref{def2}$), or equivalently the time at which the first node reaches FCS, is upper bounded by $\phi(n) < \text{log}(n^2 /4)$, see Prop.$\ref{lem1proof}$ in the Appendix. In comparison, the Lem.$\ref{lem1},\ref{lem2}$ imply that for any random sequence of connected graphs, the maximum time of complete data collection (resp. full data dissemination) is the exponentially larger time $(n-1)$. However, this does not necessarily diminish the results of Lem.$\ref{lem1},\ref{lem2}$, since the Appendix contains the Example $\ref{example}$ that proves that there are sequences of connected graphs for which CP (cf. Def.$\ref{def1}$) and DP (cf. Def.$\ref{def2}$) are solved only at the time $(n-1)$. It follows then, that without changing the required connectivity conditions, the deterministic upper bounds stated in Lem.$\ref{lem1},\ref{lem2}$ cannot be improved upon.

\section{Numerical Simulations}\label{sec:sim}

It is not hard to show that $\lceil \text{log} (n) \rceil$ is the earliest time for FKS given any sequence of random graphs (see Thm.$\ref{strthm}$ in the Appendix). This implies that $\hat{k}(n) \geq \lceil \text{log} (n) \rceil$. In Fig.$\ref{fig10}$ we plot the bounds $ \phi(n) + \big( \frac{5  \pm 1}{2} \big)$ for network sizes $n \in \{ 3,4, \ldots, 150\}$. On the same graph we plot the time at which FKS occurs when averaged over $1000$ simulations of random graph sequences $\mathcal{R}(n-1)$. It is clear that the lower and upper bounds derived in Thm.$\ref{probthm}$ are a very close approximation to the empirical average time of FKS, particularly they both remain with one iteration of the empirical average. Analogously, we plot in Fig.$\ref{fig10}$ the lower and upper bounds $\phi(n)$ and $(1+\phi(n))$ to the \emph{earliest} time of FKS, which is also plotted on the graph when averaged over $1000$ simulations. Again, it is clear that the empirical average time to \emph{earliest} FKS  remains within one iteration of the bounds derived in Thm.$\ref{probthm1}$.

\begin{figure}[htb] \center \includegraphics[width=0.9 \linewidth]{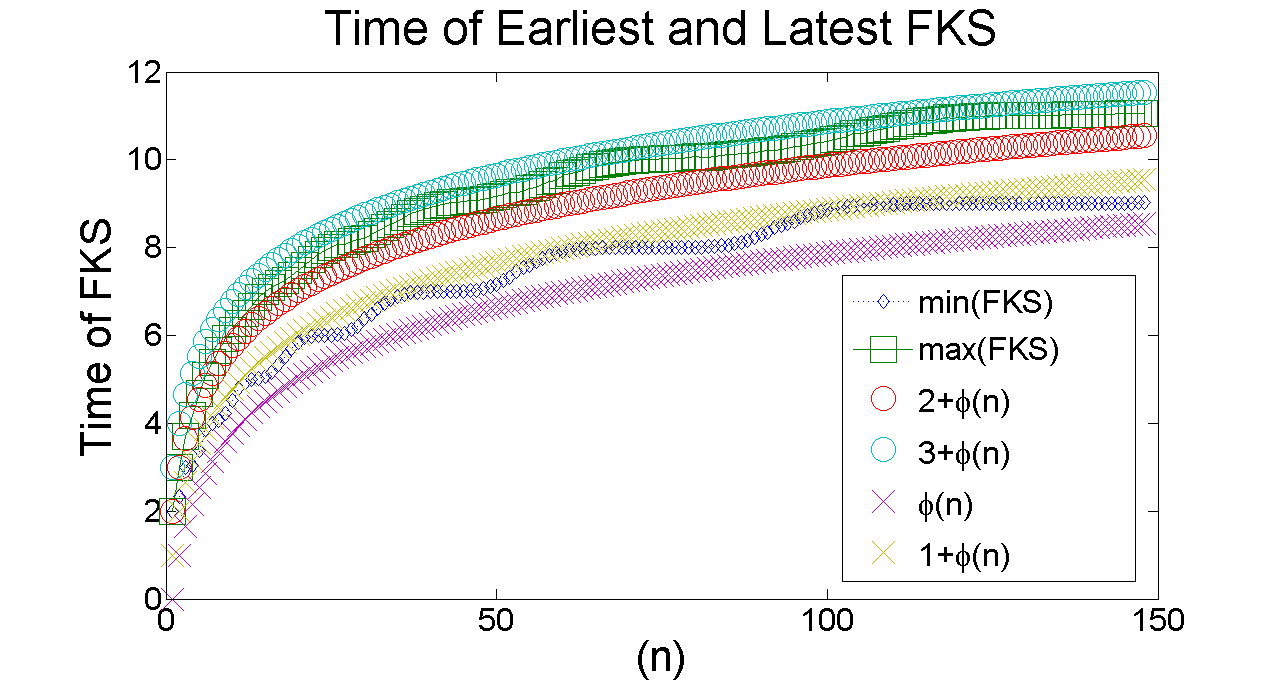} \caption{Empirical time until Earliest FKS and Latest FKS. Network size $\in [3, 150]$. Number of simulations $= 1000$.} \label{fig10} \end{figure}

In Fig.$\ref{fig11}$ we plot the Thm.$\ref{probthm}$ bounds $ \phi(n) + \big( \frac{5  \pm 1}{2} \big)$ for network sizes $n \in \{ 5,4, \ldots, 500\}$. On the same graph we plot the time at which FKS occurs when averaged over $20$ simulations of random graph sequences $\mathcal{R}(n-1)$. Similar to Fig.$\ref{fig10}$, we also plot the bounds of Thm.$\ref{probthm1}$ and the empirical average of the \emph{earliest} time of FKS. The results are identical to Fig.$\ref{fig10}$, despite the fact that we average the times of FKS over 50 times less than Fig.$\ref{fig10}$, and increased the total network size by a factor of $4/3$.

\begin{figure}[htb] \center \includegraphics[width=0.9 \linewidth]{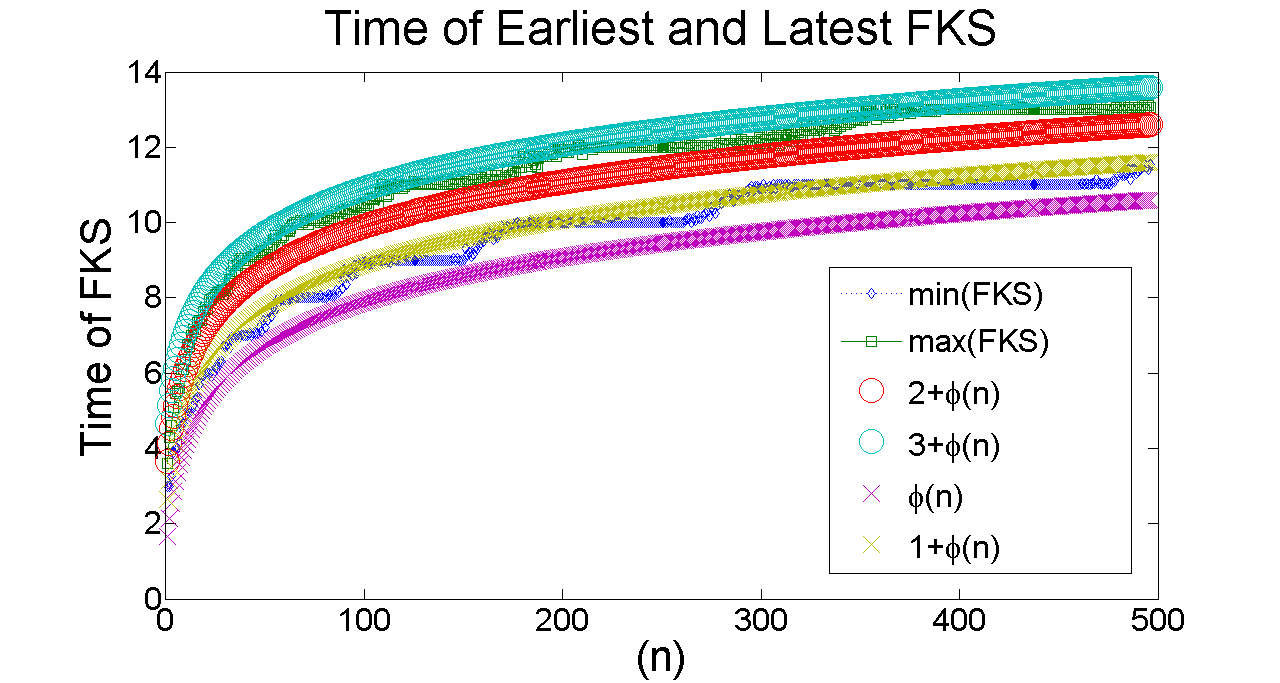} \caption{Empirical time until Earliest FKS and Latest FKS. Network size $\in [5, 500]$. Number of simulations $= 20$.}\label{fig11} \end{figure}

An interesting phenomenon can be seen in both Fig.$\ref{fig10}-\ref{fig11}$, which is the wave-like property of the empirical averaged FKS times. The simulation conditions do not appear to account for this unexpected result, and we can only leave its explanation as potential future work.

\section{Conclusion}\label{sec:con1}

Deterministic connectivity conditions were derived for both data dissemination and data collection on time-varying digraphs. The conditions assumed a ``broadcast" protocol wherein each node sent the entirety of its current data to all nodes within its communication range, which was allowed to vary among nodes and thus requiring the use of digraphs, rather than the far stricter condition of symmetric communication channels. The deterministic conditions were proven using a matrix representation of each communication graph, and non-trivial manipulations thereof. Conditions were given for the ubiquitous termination of all network communication once the particular distribution problem of interest had been solved. The termination problem is tantamount to the ``2-army problem", and, to our knowledge, the conditions given are a unique solution to it.

A probabilistic approach was introduced wherein at each iteration an identically uniform and independently chosen random connected digraph was used to model the communication links for that iteration. Lower and upper bounds were obtained in regard to the expected times of termination for each of the various data distribution problems that were addressed in the deterministic setting. The probabilistic bounds were shown to be exponentially smaller than the upper bounds given by the deterministic results. Nonetheless, examples were given to demonstrate how the deterministic results cannot be improved upon without strengthening the connectivity constraints.

Empirical results confirmed the tightness of the probabilistic bounds, which remained both analytically and empirically within one iteration of the average time to termination, and thus cannot be improved upon since we have assumed communication occurs at discrete time iterations. A curious wave-like property was observed in the empirical average time until termination; an explanation of this remains an open question.

\subsection{Future Work}\label{sec:con2}

Our first concern is regarding the two solutions to KP which do not also solve TP. We predict there may be various connectivity conditions that when applied to the iterations \emph{after} the time $(n-1)$, would lead a termination point within at most the \emph{next} $(n-1)$ iterations. These conditions would likely depend on the two different sets of conditions assumed in Thm.$\ref{thrm2}$ and Thm.$\ref{thrm3}$. Furthermore, these hypothetical conditions would not improve upon the time until FKS, which is already guaranteed by these theorems to occur by time $(n-1)$, but would only yield the benefit of solving TP.  In regard to issue of connectivity conditions beyond the time $(n-1)$, we may consider the case when the initial set of data \emph{changes} with time, in which case none of our results could be directly applied, but rather they would need modifications contingent on the particular dynamics of the local data.

Another source of insight towards efficient data dissemination is that of ordering or classifying different types of network dynamics in terms of the associated redundancy of communications, expediency of data proliferation, and constraints on connectivity. For instance, the sequence of communication graphs presented in Thm.$\ref{strthm}$ yield FKS in minimal time and with minimal redundancy, at the cost of assuming very strict connectivity conditions. Conversely, the Example $\ref{example}$ presents a simple, perhaps overly structured but yet easily implemented sequence of graphs (all of which are identical), that yield FKS in maximum time and with very large (perhaps maximum) communication redundancy. Our probabilistic model can be seen as a mid-point between these two deterministic extremes, as it allows for an independent and identically distributed type of random connected graph at each iteration. The speed of convergence to FKS using this probabilistic model has been quantified and numerically verified in the present work, however the (average) redundancy of communication associated with this model has not been addressed. Quantifying how connectivity conditions relate to communication redundancy would seem to be of worthy consequence, since node storage and communication resources are limited, if not in actuality than at least intuitively, and in some cases expediency to FKS may be trumped by network resources. Even defining an appropriate metric for communication redundancy, and more so for connectivity constraints, remain debatable.

Lastly, it is conceivable that the sequence of connected graphs may not be reducable to a uniformly random element of the set of permutations $\text{p} ( \mathcal{V} )$. Given a different distribution and sample space from which the randomly connected graphs are chosen, the expected times of FKS (resp. FCS, FDS) will certainly vary. It appears to be a considerable task to obtain expectation bounds for FKS when assuming a more general model for the sequence of random communication graphs.

\section{Appendix}\label{sec:app}

\noindent \textbf{Lemma $\ref{lem1}$} \emph{Proof.} The proof of Lem.$\ref{lem1}$ is most easily illustrated by appealing to the proof of Lem.$\ref{lem2}$. The communication conditions assumed in Lem.$\ref{lem2}$ and therein proven to solve DP can be stated as follows in terms of a sequence of edge sets: \begin{equation}\label{array1} \begin{array}{llll}  &  \mathcal{E}(0) \supseteq \{  ( 1 , 2_0 ) \} \\ &  \mathcal{E}(1) \supseteq \{  ( 1 , 2_1 ) ,(  2_1 , 3_1 ) \} \\ & \vdots \\  &  \mathcal{E}( k) \supseteq \{  ( 1 , 2_k )  , \ldots , \\ & \quad \quad \quad \quad \quad \quad \big( (k+1)_k , (k+2)_k \big)  \} \\ & \vdots \\  &  \mathcal{E}(n-2) \supseteq \{  ( 1 , 2_{n-2} )  , \ldots , \\ & \quad \quad \quad \quad \quad  \quad \ \big( (n-1)_{n-2} , n_{n-2} \big)  \}  \end{array} \end{equation} The set of nodes $\{ 1, 2 , \ldots, n \}$ that comprise the network $\mathcal{V}$ is assumed to be invariant with time, thus the sequence of edge sets $\big\{ \mathcal{E}(k) \ : \  k \in \{ 0 , 1 , \ldots, n-2 \} \big\}$ fully defines the sequence of network communication graphs  $\big\{ G(k) \ : \  k \in \{ 0 , 1 , \ldots, n-2 \} \big\}$. Further note that each communication graph $G(k)$ can be defined by a non-negative $(n \times n)$ matrix $M(k) = [ M_{ij}(k) ]$ with positive entries for each pair $(i,j) \in \mathcal{E}(k)$, \begin{equation}\label{mm1}  M_{ij}(k) > 0 \ \ \Leftrightarrow \ \ (i,j) \in \mathcal{E}(k) \ . \end{equation} The update $(\ref{flood})$ implies that the cardinality of any knowledge set $\mathcal{K}_i(k)$ will never decrease, thus we let $M_{ii}(k) > 0$ for all $i \in \mathcal{V}$.

Define $\mathcal{M}(k) \doteq \prod_{r = 0}^{k} M(r)$. In accordance with the update $(\ref{flood})$, the knowledge set of node $i$ will contain data $d_j$ at time $(k+1)$ if and only if $\mathcal{M}_{ji}(k) > 0$,  \begin{equation}\label{mm2}  \mathcal{K}_i(k+1) \supseteq \{ d_j \} \ \ \Leftrightarrow \ \   \mathcal{M}_{ji}(k) > 0 \ . \end{equation} The Lem.$\ref{lem2}$ implies that if $(\ref{array1})$ holds, then $\mathcal{M}_{1j} (n-2) > 0 $  for all $j \in \mathcal{V}$ (note that in Lem.$\ref{lem2}$ we define without loss of generality (WLG) $w = 1$).

To summarize the above: the Lem.$\ref{lem2}$ places conditions on the sequence $ \big\{ \mathcal{E}(k) \ : \ k \in \{ 0 , 1, \ldots, n-2 \} \big\}$ that in turn define the matrices $\big\{ M(k) \ : \ k \in \{0, 1, \ldots, n-2 \} \big\}$, which, when multiplied together, yield $\mathcal{M}_{1j} (n-2) > 0 $  for all $j \in \mathcal{V}$. It thus follows that by transposing the product $\mathcal{M}(k)$ we obtain a sequence of matrices $\big\{ M'(k) = M(n-k-2) \ : \ k \in \{0, 1, \ldots, n-2 \} \big\}$ that in turn define a sequence of edge sets and thus communication graphs.

The CP requires that a single node $q \in \mathcal{V}$ obtain the data that is initially held at every other node $i \in \mathcal{V}_{-q}$. Let $q = n$ WLG. In terms of the matrix $\mathcal{M}(k)$, the CP is solved when $\mathcal{M}{jn} > 0$ for all $ j \in \mathcal{V}$. This condition is simply the transpose of the matrix $\mathcal{M}(n-2)$ once DP is solved, that is $\mathcal{M}_{1j} (n-2) > 0 $  for all $j \in \mathcal{V}$. Thus to solve CP we need only the condition $\big\{ M'(k) = M(n-k-2) \ : \ k \in \{0, 1, \ldots, n-2 \} \big\}$, which, when put in terms of the sequence of edge sets defined in $(\ref{array1})$, is:

\begin{equation}\label{array2} \begin{array}{llll}  &    \mathcal{E}(0) \supseteq \{  ( 1_0 , 2_0), ( 2_0 ,  3_0 ), \ldots, \\ &  \  \ \ \ \  \quad \quad \quad \quad  \quad \quad \quad \quad (  (n-1)_0  , n ) \} \\ & \mathcal{E}(1) \supseteq \{ (2_1, 3_1),  (3_1, 4_1), \ldots ,  \\ &   \  \ \ \ \  \quad \quad \quad \quad \quad \quad \quad \quad  ( (n-1)_1,n )  \} \\ & \vdots \\  &  \mathcal{E}( k) \supseteq \{  \big( (k+1)_k,  (k+2)_k \big)  , \ldots , \\ &   \  \ \quad  \quad  \quad \quad \quad \quad \quad \quad \quad  (  (n-1)_k , n )  \} \\ & \vdots \\  &  \mathcal{E}(n-2) \supseteq \{  ( (n-1)_{n-2}, n )  \}  \end{array} \end{equation} The proof of Lem.$\ref{lem1}$ is complete by noting that the set of conditions $(\ref{array2})$ is identical to the set of conditions $(\ref{clem1})$ defined in Lem.$\ref{lem1}$.  \qed

\noindent \textbf{Theorem $\ref{thrm1}$} \emph{Proof.} If $k \in \{ 0 ,1 , \ldots,  \lceil n/2 \rceil - 1 \}$ then $n(k) \in G(k)$ and thus all nodes have input-cords of length $(n-1)$. It follows that $(\ref{clem1})$ is satisfied, and thus by Lem.$\ref{lem1}$ we have $| \mathcal{K}_i( k+1 ) | \geq  ( k+2)$ for all $k \in \{ 0 ,1 , \ldots,  \lceil n/2 \rceil - 1 \}$ and $i \in \mathcal{V}$. If $k \in \{ \lceil n/2 \rceil ,\lceil n/2 \rceil +1 , \ldots,  n - 2 \}$ then $(n-k)(k) \in G(k)$ which implies each node has an input-cord of at least length $(n-k-1)$. This is precisely the condition $(\ref{clem1})$ required in Lem.$\ref{lem1}$, thus implying $| \mathcal{K}_i( k+1 ) | \geq  ( k+2)$ for all $k \in \{ \lceil n/2 \rceil ,\lceil n/2 \rceil +1 , \ldots,  n - 2  \}$ and $i \in \mathcal{V}$. \qed

\noindent \textbf{Corollary $\ref{cor1}$} \emph{Proof.}   Given the update protocol $(\ref{flood})$ and communication condition $\psi(k) \in G(k)$  (cf. $(\ref{psi})$), the Thm.$\ref{thrm1}$ guarantees that $|\mathcal{K}_i(k) | > k+1$ for all $i \in \mathcal{V}$ and $k \in \{ 0 , 1, \ldots, n-1 \}$.  Each node $i \in\mathcal{V}$ initially holds the unique data $\{d_i\}$, thus we have $|\mathcal{K}_i(k) | \leq n$. It then follows that once $|\mathcal{K}_i(n) | = n$, the node $i$ knows that the condition $|\mathcal{K}_i(k) | > k+1$ no longer holds, and thus $n$ must be the network size. All nodes $i \in \mathcal{V}$ can thus distributively compute the network size $n$ by time $k = n$ under the assumed communication conditions. \qed

\noindent \textbf{Lemma $\ref{lem2}$} \emph{Proof.} Let $w = 1$ WLG. We will assume $| \mathcal{O}^1( k ) | = (k+1)  \ , \ \forall \ k \in \{ 0,1, \ldots, n-2 \}$, since adding more communication links can only increase the quantity of interest $ | \{   \mathcal{K}_i  (k)  \supseteq  d_1 \ : \ i \in \mathcal{V} \} |$, which we intend to show is lower bounded by $(k+1)$.

At time $k=0$, only node $1$ contains the data $d_1$. At $k = 0$ we have $| \mathcal{O}^1( 0 ) | = 1$, so let us label the single node contained in $\mathcal{O}^1( 0 )$ as $2_0 \in \mathcal{V}_{-1}$. Now at time $k = 1$, both nodes $1$ and $2_0$ contain the data $d_1$. At time $k =1$ we have $| \mathcal{O}^1( 1 ) | = 2$, so let us label the two nodes contained in $\mathcal{O}^1 (1 )$ as $2_1 , 3_1 \in \mathcal{V}_{-1}$. If $2_1  = 2_0$, then $3_1 ( \neq 2_0)$ will obtain the data $d_1$ from node $2_1$, and otherwise the node $2_1  ( \neq 2_0)$ will obtain the data $d_1$ from node $1$; in both cases we end up at time $k=2$ with $3$ nodes that contain $d_1$. In summary,  if $ | \{   \mathcal{K}_i  (k)  \supseteq  d_1 \ : \ i \in \mathcal{V} \} | = (k+1)$,  all that is required for  $ | \{   \mathcal{K}_i  (k+1)  \supseteq  d_1 \ : \ i \in \mathcal{V} \} |$ to stay above $(k+2)$ is for some node $i(k)$ that does not contain data $d_1$ to take a place in $\mathcal{O}^1( k )$. Since $| \mathcal{O}^1( k ) | = (k+1)$, the latter condition necessarily must hold.  On the other hand, if $ | \{   \mathcal{K}_i  (k)  \supseteq  d_1 \ : \ i \in \mathcal{V} \} | > (k+1)$, then the condition to be proven already holds, and so, if necessary, we can apply the previous argument at the subsequent time $(k+1)$.  \qed

\noindent \textbf{Theorem $\ref{thrm2}$} \emph{Proof.} If $k \in \{ 0 ,1 , \ldots,  \lceil n/2 \rceil - 1 \}$ then $(k+2)(k) \in G(k)$ and thus all nodes have output-cords of at least length $(k+1)$. This is precisely the condition $(\ref{clem2})$ in Lem.$\ref{lem2}$, and thus $| \{ \mathcal{K}_i(k+1)  \supseteq d_w  \ : \ i \in \mathcal{V} \} | \geq (k+2)$ for all $k \in \{ 0 ,1 , \ldots,  \lceil n/2 \rceil - 1 \}$ and $w \in \mathcal{V}$. If $k \in \{ \lceil n/2 \rceil ,  \ldots,  n - 2 \}$ then $(n)(k) \in G(k)$ which implies each node has an output-cord of length $(n-1)$. It follows that $(\ref{clem2})$ is satisfied and thus by Lem.$\ref{clem2}$ we have $| \{ \mathcal{K}_i(k+1)  \supseteq d_w  \ : \ i \in \mathcal{V} \} | \geq (k+2)$ for all $k \in \{ \lceil n/2 \rceil  ,  \ldots,  n - 2  \}$ and $w \in \mathcal{V}$. \qed

\noindent \textbf{Theorem $\ref{thrm3}$} \emph{Proof.} For $k \in \{ 0, 1, \ldots, \lceil \frac{n}{2} \rceil - 1 \}$ we have $\eta(k) = k+2$ and thus Lem.$\ref{lem2}$ implies $| \{ \mathcal{K}_i(k+1)  \supseteq d_w  \ : \ i \in \mathcal{V} \} | \geq (k+2)$ for all $k \in \{ 0 ,1 , \ldots,  \lceil n/2 \rceil - 1 \}$ and $w \in \mathcal{V}$. By $(\ref{mm1})$ we can define the set of matrices $A \big( \eta(k) \big)  \in \mathbb{R}^{n \times n}$ for which $\eta(k) \in G(k)$. By $(\ref{mm2})$ we then have $\mathcal{M}(\lceil \frac{n}{2} \rceil) = \prod_{r = 0}^{\lceil \frac{n}{2} \rceil - 1 } A \big( \eta(r) \big)$, which by Lem.$\ref{lem2}$ will satisfy, \begin{equation}\label{zero} \sum_{j = 1}^n \mathcal{I} \Big( \mathcal{M} \big( \lceil  n/2 \rceil \big) _{ij} \Big) \geq \lceil n/2  \rceil + 1 \ ,  \end{equation} where $\mathcal{I}(\cdot)$ is the indicator function defined as, $$  \mathcal{I}(A_{ij}) =  \begin{cases}   1      & \quad \text{if } A_{ij} > 0 \\    0   & \quad \text{if }  A_{ij}  = 0  \\  \end{cases} $$ for a matrix $A = [A_{ij}]$.

For $k \in \{   \lceil \frac{n}{2} \rceil , \ldots, n-2  \}$ we have $\eta(k) = n- k$. By $(\ref{mm1})$ this defines the set of matrices $A \big( \eta(k) \big)  \in \mathbb{R}^{n \times n}$ for which $\eta(k) \in G(k)$, $\eta(k) = n- k$ and $k \in \{   \lceil \frac{n}{2} \rceil , \ldots, n-2  \}$.  We now let, $$ \mathcal{\tilde{M}}(n-1) = \prod_{r = \lceil \frac{n}{2} \rceil }^{n-2 } A \big( \eta(r) \big) \ , $$ and seek to prove, \begin{equation}\label{newsum} \textbf{1}_n'  \mathcal{I}  \big(  \mathcal{M} ( \lceil  n/2 \rceil ) \cdot \mathcal{\tilde{M}}(n-1) \big)   \textbf{1}_n  = n^2  \ , \end{equation} where $\textbf{1}_n \in \mathbb{R}^{n \times 1}$ is a vector of unit values. Note that $(\ref{newsum})$ is equivalent to, \begin{equation}\label{newsum1} \textbf{1}_n'  \mathcal{I}  \big(  \mathcal{\tilde{M}}(n-1) ' \cdot \mathcal{M} ( \lceil  n/2 \rceil ) '  \big)  \textbf{1}_n  = n^2  \ . \end{equation}

Consider the transpose of $\mathcal{\tilde{M}}(n-1)$, \begin{equation}\label{newsum2} \begin{array}{llll}  \mathcal{\tilde{M}}(n-1)'  &  = \prod_{r = n-2}^{ \lceil \frac{n}{2} \rfloor  } A ' \big( \eta(r) \big) \\ &  =  \prod_{r = 0}^{ \lceil \frac{n}{2} \rceil - 1} A ' (r+2) \\ &  =  \prod_{r = 0}^{ \lceil \frac{n}{2} \rceil - 1} A(r+2) \\ & = \mathcal{M}( \lceil  n/2 \rceil) \ .  \end{array} \end{equation} From $(\ref{newsum2})$ we obtain, \begin{equation}\label{newsum3} \begin{array}{llll} &  \mathcal{I}  \big(  \mathcal{\tilde{M}}(n-1) ' \cdot \mathcal{M} ( \lceil  n/2 \rceil ) '  \big) _{ij} = \\ &   \sum_{\ell = 1}^n \mathcal{I} \big(  \mathcal{M} ( \lceil  n/2 \rceil )_{i \ell} \mathcal{M} ( \lceil  n/2 \rceil )_{j \ell} \ \big) \ . \end{array} \end{equation} The condition $(\ref{zero})$ implies that, for any $(i,j) \in \mathcal{V}^2$, there must exist at least one value of $\ell \in \mathcal{V}$ for which both $\mathcal{M} ( \lceil  n/2 \rceil )_{i \ell}$ and  $\mathcal{M} ( \lceil  n/2 \rceil )_{j \ell}$ are non-zero, thus implying $(\ref{newsum1})$. \qed

\noindent \textbf{Theorem $\ref{probthm}$} \emph{Proof.} Define $P(k) \in \mathbb{R}^{n \times n}$ as follows $P(k)_{ij}  = \textbf{P}[ \mathcal{K}_{i}(k) \supseteq d_j ]$, where $\textbf{P}[A]$ is the probability that condition $A$ is true. Due to symmetry, the expected value of any sequence of random graphs will yield at time $k$,

$$ 
P(k) = 
 \begin{pmatrix}
  1 & x_k & \cdots & x_k \\
  x_k & 1 & \cdots & x_k \\
  \vdots  & \vdots  & \ddots & \vdots  \\
  x_k & x_k & \cdots & 1 
 \end{pmatrix}
$$ where $x_k$ is the probability that at time $k$ the knowledge set of node $i$ contains the data $d_j$ for any $j \in \mathcal{V}_{-i}$, $$  x_{k} = \textbf{P} [   \mathcal{K}_i(k) \supseteq  d_j    \ : \ i \in \mathcal{V}  \ , \ j \in \mathcal{V}_{-i}  ] \ .  $$  Next we let, $$ \begin{array}{llll} &   x_{k 0} = \textbf{P} \big[  \mathcal{K}_i(k) \nsupseteq d_j    \ : \ i \in \mathcal{V}  \ , \ j \in \mathcal{V}_{-i}   \big]  \\  & \  \ \quad = 1 - x_k \ ,  \end{array} $$ thus $x_{k 0}$ equals the probability that at time $k$ node $i$ does not contain the data $d_j$.

We will denote $E_k = \mathbb{E}(| \mathcal{K}_i (k)  \setminus d_i | )$ for any $i \in \mathcal{V}$, and similarly denote $E ' _k = \mathbb{E}(| \mathcal{K}_i(k) | ) = E_k + 1$. Accordingly, we initially have $E_0 = 0$ and $x_{00} = 1$. The probability $x_{k 0}$ is recursively defined as, $$  \begin{array}{llll} &  x_{k 0} =  \textbf{P} \big[  \mathcal{K}_i(k) \nsupseteq d_j \big] \\ & \ \ \ \ \ =   \textbf{P} \big[  \mathcal{K}_i(k-1) \nsupseteq d_j \big] \cdot \\ & \ \ \textbf{P} \big[  (\ell, i ) \in \mathcal{E}(k-1)  , \ \mathcal{K}_{\ell}(k-1) \nsupseteq d_j \  |  \\ & \quad  \quad  \quad \quad \quad \quad \ \ \ \ \ \ \ \ \ \ \ \ \mathcal{K}_i(k-1) \nsupseteq d_j  \big] \\ &  = x_{(k-1)0} \Big( \frac{n - 1 - E'_k }{n-1} \Big)  =  x_{(k-1)0} \Big( \frac{n - 2- E_k }{n-1} \Big) \ ,  \end{array}  $$ wherefrom the expectation $E_k$ can be defined recursively as, \begin{equation}\label{rec2} \begin{array}{llll}  E_k &  =  (n-1) x_k \\ & = n - 1 - (n-1) x_{k 0} \\ & =   n -  1 -  \frac{ \prod_{r= 0}^{k-1} ( n - 2 - E_r ) }{ (n-1)^{k-1} }  \  . \end{array} \end{equation} Rearranging $(\ref{rec2})$ and taking logarithms yields, $$ \begin{array}{llll} & \text{log} \Big( 1 - \frac{E_k}{n-1} \Big) = k \cdot \text{log} \Big( \frac{n-2}{n-1} \Big)  + \\ & \quad  \quad \quad \quad   \quad    \quad    \quad  \sum_{r = 1}^{k-1} \text{log} \Big( 1 - \frac{E_r}{n-2} \Big) \end{array} $$ from which we obtain, \begin{equation}\label{rec3} \begin{array}{llll}  & \text{log} \Big( 1 - \frac{E_k}{n-2} \Big) <  \text{log} \Big( 1 - \frac{E_k}{n-1} \Big)  = \\ & \ \ \  k \cdot \text{log} \Big( \frac{n-2}{n-1} \Big)  +    \sum_{r = 1}^{k-1} \text{log} \Big( 1 - \frac{E_r}{n-2} \Big)   \\  &   <  \big( k   + \sum_{r = 1}^{k-2} 2^{r-1} (k-r) \big) \cdot \text{log} \Big( \frac{n-2}{n-1} \Big)  \\ & \quad  \quad  \quad  \quad  \quad  \quad  \quad  \quad   + 2^{k-2} \text{log} \Big(  \frac{n-3}{n-2}  \Big) \ .   \end{array} \end{equation} Note that $(\ref{rec3})$ was obtained by iterating $(\ref{rec2})$ backwards until $k = 2$. At the point $k=2$ we used the fact that $E_1 = 1$, which can be obtained from $(\ref{rec2})$ and the initial condition $E_0 = 0$.

Let $\chi_k =  k   + \sum_{r = 1}^{k-2} 2^{r-1} (k-r)$.  Note that $\chi_k = 3 \cdot 2^{k-2}-1$, which we now prove by induction.  Let $k = 3$, then $\chi_3 = 5 = 3 \cdot 2^{3-2} - 1$. Next assume $\chi_k = 3 \cdot 2^{k-2}-1$, and consider $\chi_{k+1}$, 
 $$ \begin{array}{llll} & \chi_{k+1} = k   + 1 + \sum_{r = 1}^{k-1} 2^{r-1} (k-r+1)   \\  &  \ \ \ \ = \chi_k +1 + 2^{k-2} + \sum_{r = 1}^{k-1} 2^{r-1} \\ & \ \ \ \ = \big( 3 \cdot 2^{k-2} - 1 \big)  + 1 + 2^{k-2} + 2^{k-1} - 1 \\ & \ \ \ \ = 3 \cdot 2^{k-1} - 1 \ . \end{array} $$  Solving $(\ref{rec3})$ for $E_k$ yields, \begin{equation}\label{b11}  E_k > (n-1) \Bigg( 1 -  \Big( \frac{n-3}{n-2} \Big)^{2^{k-2}} \Big( \frac{n-2}{n-1} \Big)^{\chi_k}    \Bigg)   \end{equation} which we now show is lower bounded by, \begin{equation}\label{n1} n \Bigg( 1 - \Big( \frac{n-2}{n} \Big)^{k-1} \Bigg) - 1 \ . \end{equation} To lower bound $(\ref{b11})$ we upper bound $\Big( \frac{n-2}{n-1} \Big)^{\chi_k} < \Big( \frac{n-2}{n-1} \Big)^{2^{k-1}}$ which yields, \begin{equation}\label{pp1}  E_k > (n-1) \Bigg( 1 -  \Big( \frac{(n-3)(n-2)}{(n-1)^2} \Big)^{2^{k-2}}   \Bigg) \ . \end{equation} Canceling the equivalent terms in $(\ref{pp1})$ and $(\ref{n1})$ then rearranging it will suffice to show, $$ \begin{array}{llll} &  \frac{(n-1)^2}{(n-2) n } <  \Big( \frac{ (n-1)^{3}}{n^{2} (n-3)} \Big) ^{2^{k-2}}  \ .  \end{array} $$   Assuming $k \geq 2$ we can proceed,  $$ \begin{array}{llll} &  1 + \frac{1}{n^2 - 2 n} < 1  + \frac{3n - 1}{n^3 - 3 n^2} \\ & \therefore  n^3 - 3 n^2  < 3 n^3 - 7 n^2  + 2n  \\ & \therefore 0 < (n-1)^2 \ , \end{array} $$ thus verifying the lower bound $(\ref{n1})$.

Due again to symmetry, we have $\sum_{i = 1}^n E'_k = n E'_k$. We now assume $n E'_k > n^2 - 1$, from which we can then infer $\mathbb{E} \big(  \sum_{i = 1}^n  | \mathcal{K}_i (k) | \big) = n^2$. From $(\ref{n1})$ we have, \begin{equation}\label{rec4}  E'_k > n \Bigg(  1 -  \Big( \frac{n-2}{n}\Big)^{2^{k-1}}    \Bigg)    \ .   \end{equation} We now solve $(\ref{rec4})$ for a sufficiently large $k$ when supposing $E'_k > n - (1/n)$, 
\begin{equation}\label{rec5} \begin{array}{llll} &    n \Bigg(  1 -  \Big( \frac{n-2}{n}\Big)^{2^{k-1}}    \Bigg)  \geq  n - (1/n) \ , \\ &   \therefore \ \ \ \Big( \frac{n-2}{n}\Big)^{2^{k-1}}   \leq \frac{1}{n^2} \ , \\ & \therefore \ \  k \geq 2 + \text{log}  \Big(  \frac{\text{log}(n)}{\text{log} \big(n/(n-2)\big)}     \Big) \ .    \end{array}  \end{equation} Note that, $$ \frac{\text{log}(n)}{\text{log} \big(n/(n-2)\big)}  = 2^{ \phi(n)}  \ , $$ which can be easily verified by $(\ref{phidef})$.  \qed

\vspace{0.4 cm}

The following lemma will be used in the proof of Thm.$\ref{probthm1}$. Note that this lemma, together with Thm.$\ref{probthm}$, shows that the bound $\hat{k}$ cannot be reduced, and thus is tight. In other words, one iteration below $\hat{k}$ and we do not expect FKS, that is $\mathbb{E} \big( \sum_{i = 1}^n | \mathcal{K}_i( k) |  = n^2 \big)  > \hat{k}-1$.

\begin{lem}\label{lemthm}  For $\hat{k} = 2 + \phi(n)$ (cf.$(\ref{phidef})$), $E_{\hat{k}-1} < (n^2- n - 1)/n$.   \end{lem}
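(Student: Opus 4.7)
The plan is to mirror the recursion-based analysis used in the proof of Thm.~\ref{probthm} --- which yielded a lower bound on $E_k$ --- in order to derive an \emph{upper} bound on $E_k$, then evaluate at $k = \hat{k}-1$. First, the conclusion $E_{\hat{k}-1} < (n^2-n-1)/n$ is equivalent to $y_{\hat{k}-1} > 1/n$, where $y_k := (n-1) - E_k$. From (\ref{rec2}), the sequence $y_k$ satisfies the clean one-step recursion $y_k = y_{k-1}(y_{k-1}-1)/(n-1)$ with $y_0 = n-1$. Taking natural logarithms and solving the resulting linear recurrence yields the telescoping identity
\[
\ln\!\bigl(y_k/(n-1)\bigr) = \sum_{r=0}^{k-1} 2^{\,k-1-r}\,\ln\!\bigl(1-1/y_r\bigr),
\]
so the goal reduces to showing
\[
S := \sum_{r=0}^{\hat{k}-2} 2^{\,\hat{k}-2-r}\bigl(-\ln(1-1/y_r)\bigr) < \ln\!\bigl(n(n-1)\bigr).
\]

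Each summand is bounded via $-\ln(1-1/y_r) \le 1/(y_r-1)$ (valid when $y_r > 1$), so the task reduces to a lower bound on $y_r$ throughout $0 \le r \le \hat{k}-2$. First I would establish the coarse bound $y_r \ge n - 2^r$: writing $\ell_r := 1 - y_r/(n-1)$, the exact recursion becomes $\ell_r = 2\ell_{r-1} - \ell_{r-1}^2 + (1-\ell_{r-1})/(n-1)$, and using $\ell_{r-1} \ge 0$ to bound both $-\ell_{r-1}^2 \le 0$ and $(1-\ell_{r-1})/(n-1) \le 1/(n-1)$ yields the Gronwall-type inequality $\ell_r + 1/(n-1) \le 2(\ell_{r-1}+1/(n-1))$, which iterates to $\ell_r \le (2^r-1)/(n-1)$, i.e.\ $y_r \ge n - 2^r$. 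This is effective on $r \le r^\star := \lfloor \log_2(n/2) \rfloor$, where $y_r \ge n/2$. For $r > r^\star$ I would switch to the secondary bound $y_{r+1} \ge y_r^2/(2(n-1))$, which follows from the exact recursion whenever $y_r \ge 2$ via $y_r - 1 \ge y_r/2$, iterating from the base point $y_{r^\star} \ge n/2$ to get a double-exponentially decaying but strictly positive lower bound through $r = \hat{k}-2$.

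Splitting $S$ at $r^\star$: the first portion is dominated by its term at $r=0$, and summing the geometric series gives approximately $2^{\hat{k}-1}/n$, which upon substituting $2^{\hat{k}-1} = 2\ln n/\ln(n/(n-2))$ (from (\ref{phidef})) and using the Taylor expansion $\ln(n/(n-2)) = 2/(n-2) + O(1/n^2)$ simplifies to $\ln n + O(1)$. The second portion $(r > r^\star)$ contributes only $O(1)$ since the geometric factor $2^{\hat{k}-2-r}$ decays faster than $1/(y_r-1)$ grows under the quadratic descent. Summing and comparing with $\ln(n(n-1)) = 2\ln n - \ln\!\bigl(n/(n-1)\bigr) = 2\ln n - O(1/n)$ shows $S < \ln(n(n-1))$ for all $n \ge 3$, which gives $y_{\hat{k}-1} > 1/n$ and hence the lemma.

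The main obstacle is the transition region $r \approx \log_2 n$, where the linear bound $y_r \ge n - 2^r$ becomes trivial while the quadratic-descent bound is still loose. Careful bookkeeping of the constants (and possibly one or two auxiliary estimates bridging the two regimes) is required because the lemma asserts single-iteration tightness: combined with Thm.~\ref{probthm}, it shows that $\hat\gamma = 2$ cannot be reduced --- one iteration below $\hat k$, the expected knowledge is still strictly less than full coverage.
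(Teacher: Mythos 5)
Your route is genuinely different from the paper's: the paper bounds $E_{\hat{k}-1}$ by playing the lower bound $(\ref{b11})$ on $E_{\hat{k}}$ against a postulated maximum $\hat{E}_{\hat{k}-1}$ and then verifies a chain of logarithmic inequalities, whereas you work with the exact telescoped form of the recursion $(\ref{rec2})$ in the variable $y_k=(n-1)-E_k$. Your reduction is correct: indeed $y_k=y_{k-1}(y_{k-1}-1)/(n-1)$ with $y_0=n-1$, the identity $\ln\bigl(y_k/(n-1)\bigr)=\sum_{r=0}^{k-1}2^{k-1-r}\ln(1-1/y_r)$ holds, and the claim is equivalent to $S<\ln\bigl(n(n-1)\bigr)$. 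The first-regime analysis is also sound: the Gronwall iteration gives $\ell_r\le(2^r-1)/(n-1)$, hence $y_r\ge n-2^r$, and the contribution to $S$ from $r\le r^\star$ is $\ln n\,(1+o(1))$ against a total budget of roughly $2\ln n$. If completed, this would be a cleaner and more transparent argument than the paper's.

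However, the second regime contains a genuine gap, not a bookkeeping issue. The descent $y_{r+1}\ge y_r^2/\bigl(2(n-1)\bigr)$ iterates to $y_{r^\star+j}\ge 2(n-1)\,z^{2^j}$ with $z=y_{r^\star}/(2(n-1))\ge 1/4$, and you must run it for $j=\phi(n)-r^\star$ steps. Since $2^{\phi(n)}=\log(n)/\log\bigl(n/(n-2)\bigr)\approx (n-2)\ln(n)/2$ while $2^{r^\star}\approx n/2$, you have $2^{j}\approx\ln n$, so the bound degenerates to $y_{\phi(n)}\gtrsim 2n\cdot 4^{-\ln n}=2\,n^{1-2\ln 2}\approx 2\,n^{-0.39}\to 0$. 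This fails to establish even $y_{\phi(n)}>1$, which is a prerequisite both for your inequality $-\ln(1-1/y_r)\le 1/(y_r-1)$ at $r=\hat{k}-2$ and for the finiteness of the final summand $-\ln(1-1/y_{\hat{k}-2})$ of $S$; moreover, since the first regime already consumes $\approx\ln n$ of the $\approx 2\ln n$ budget, you actually need $y_{\phi(n)}$ bounded away from $1$ by a constant to keep that last term under control. The needed fact is true --- the recursion gives $y_{\phi(n)}\asymp\sqrt{n}$ --- but the factor-of-two loss in $y_r-1\ge y_r/2$ is squared at every step and destroys the estimate over the $\approx\log_2\ln n$ transition steps. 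To close the argument you need a sharper lower bound on $y_r$ through the transition, e.g.\ an inductive upper bound on $E_r$ of the form $y_r\ge(n-1)\prod_{s<r}(1-1/y_s)^{2^{r-1-s}}\ge c\,n\,e^{-2^{r}/n}$ (a bootstrapped mirror image of $(\ref{b11})$), rather than the lossy quadratic descent.
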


\begin{proof} For convenience denote $\gamma_{\phi(n)} = \Big( \frac{n-3}{n-2} \Big)^{2^{\phi(n)}} \Big( \frac{n-2}{n-1} \Big)^{3 \cdot 2^{\phi(n)} -1}$. We denote $\hat{E}_k$ as the maximum value of $E_k$. Applying Thm.$\ref{probthm}$ and $(\ref{b11})$ we have, $$ \begin{array}{llll} &  E_{\hat{k} - 1} - E_{\hat{k}} <   \hat{E}_{\hat{k} - 1}  -  (n-1)(1- \gamma_{\phi(n)}) \\ & \ \ \ \  \ \ \ \ \ \ \ \ \ \ <  \hat{E}_{\hat{k} - 1}  -   \frac{n^2 -n - 1}{n}  \\ &  \therefore \  \  \ E_{\hat{k} - 1} <   E_{\hat{k}}  + \hat{E}_{\hat{k}-1} - n + 1 + n^{-1} \\ &  \ \ \ \ \ \ \ \  \ \ \ \ \ < \hat{E}_{\hat{k}-1}  +  n^{-1}   \  ,  \end{array} $$ where we have used the fact $ E_k < n- 1$ for all $k \in \{0,1, \ldots, n-2 \}$, which holds due to Example $\ref{example}$ (a fixed connected digraph). Upper bounding the right-hand side (RHS) of the above by $(n - 1 - n^{-1})$ and solving for $\hat{E}_{\hat{k}-1}$ yields the condition, $$ \hat{E}_{\hat{k}-1} <  n - 1 - 2 n^{-1} \ . $$

Notice that the upper bound $(n - 1 - n^{-1})$ was chosen without any specific precondition. For this reason, along with the fact that $E_{\ell} < E_{\ell+1}$ for all $\ell \in \{ 0 , 1, \ldots, n-2 \}$ (which can be inferred simply by noting that more signals will have occurred by time $(\ell+1)$ as compared to time $\ell$), we must confirm the following, $$ \begin{array}{llll} &  \hat{E}_{\hat{k} - 1} - E_{\hat{k}} <   \hat{E}_{\hat{k} - 1}  -  (n-1)(1- \gamma_{\phi(n)}) \\ & \ \ < n-1 -2 n^{-1} - (n-1)(1- \gamma_{\phi(n)} ) \\ & \ \  =  -2 n^{-1} + (n-1) \gamma_{\phi(n)} < 0\ . \end{array} $$ Rearranging the last inequality and taking logarithms yields,  \begin{equation}\label{j13}  \frac{\text{log}(n)}{   \text{log} \Big( \frac{n}{n-2}  \Big)  } > \frac{   \text{log}\Big( \frac{n (n-1)^2 }{ 2 (n-2)  } \Big) } {  \text{log}\Big( \frac{(n-1)^3}{(n-2)^2 (n-3)} \Big) } \ .   \end{equation}

For notational convenience let $a = \text{log}(n)$, $b = \text{log}(n-2)$, $c = \text{log} \Big( \frac{(n-1)^3}{n-3} \Big)$, and $d = \text{log}\Big( \frac{(n-1)^2}{2} \Big)$.
Note that $a > b > 0$, $c > 2b$. Rearranging $(\ref{j13})$ yields, $$ \begin{array}{llll} & \ \  \frac{a}{a - b} > \frac{ a- b + d}{c - 2b}  \\ & \therefore  \ \ a c >  a^2+ b^2 + \big( 2  \cdot \text{log}(n-1)-1 \big)(a-b) \\ & \therefore \ \ \  c >  a + \frac{ b^2 }{a} + 2  \cdot \text{log}(n-1)-1   \\ & \ \ \ \ \ \  \ \ \ - 2 (b/a) \cdot \text{log}(n-1) + (b/a)  \ . \end{array} $$ Upper bounding the last $(b/a)$ term by $1$, replacing $a,b$ and $c$ with their definitions, and combining terms we obtain, $$\text{log} \Big(  \frac{(n-1)^3}{n-3}\Big) > \text{log} \Bigg(   \frac{ (n-2)^2 (n-1)^2 }{(n-1)^{2 \frac{\text{log}(n-2)}{\text{log}(n)}}}  \Bigg)  \ .  $$ Canceling the logarithms and rearranging yields, $$  (n-1)^{1+ 2 \cdot \frac{\text{log}(n-2)}{\text{log}(n)}} > (n-3) (n-2)^2 \ . $$  Taking logarithms and rearranging then yields, $$ \begin{array}{llll} & 2 \cdot \text{log} ( n-2)  \text{log} (n-1) +  \text{log} ( n-1)  \text{log} (n)  \\ &  \ \ \ >  \text{log} ( n-3)\text{log} ( n) +  2 \cdot \text{log} ( n-2 ) \text{log} ( n) \  . \end{array} $$ Concavity of the logarithm function implies $\text{log} ( n-2) / \text{log} ( n-3) >   \text{log} ( n) / \text{log} (n-1)$, thus it remains to be shown, $$ \begin{array}{llll} &   \text{log} ( n-2)  \text{log} (n-1) +  \text{log} ( n-1)  \text{log} (n)  \\ & \ \ \ \ \ \ \ \ \ \ \ \ \ \ >    2 \cdot \text{log} ( n-2 ) \text{log} ( n)  . \end{array} $$  Rearranging this inequality yields the sufficient condition, $$ \begin{array}{llll} &  \text{log}(n-2) \Big(  \text{log}(n-1)  -  \text{log}(n)  \Big)   \\ & +  \text{log}(n) \Big(   \text{log}(n-1) -  \text{log}(n-2)  \Big) > 0 \ . \end{array} $$ Note that $\text{log}(n-1)  -  \text{log}(n) < 0  <    \text{log}(n)   -  \text{log}(n-1)  <  \text{log}(n-1)  -  \text{log}(n-2)$, due to concavity of the logarithm function.  Furthermore, $ 0 < \text{log}(n-2)  <  \text{log}(n)$, thus verifying $(\ref{j13})$. \end{proof}

\noindent \textbf{Theorem $\ref{probthm1}$} \emph{Proof.} If $E'_{k}> (n - 1)$ then we can infer that at least one node $i \in \mathcal{V}$ has reached FKS at time $k$, that is $|\mathcal{K}_i(k) | = n$. We thus seek to prove $E_{\hat{k} -2} = E_{\phi(n)} > (n-2)$. Applying $(\ref{b11})$ and Lem.$\ref{lemthm}$ we have, $$ \begin{array}{llll} &  E_{\hat{k}-1} - E_{\phi(n)} < \frac{n^2-  n  - 1}{n}  - \Bigg(  (n-1) \Big(  1 -  \\ & \ \ \ \ \ \ \ \ \ \ \ \ \big( \frac{n-3}{n-2}  \big)^{2^{\phi(n) - 2}}    \big( \frac{n-2}{n-1}  \big)^{3 \cdot 2^{\phi(n) -2} - 1}   \Big)   \Bigg) \ . \end{array} $$  Rearranging the above and upper bounding the RHS by (n-2) yields the condition, \begin{equation}\label{f2} \begin{array}{llll} &   E_{\phi(n)} >  E_{\hat{k}-1}  + \frac{1}{n}  - (n-1) \\  &  \ \ \ \ \ \ \ \ \ \Big(  \big( \frac{n-3}{n-2}  \big)^{2^{\phi(n) - 2}}   \big( \frac{n-2}{n-1}  \big)^{3 \cdot 2^{\phi(n) -2} - 1}   \Big) \\ & \ \   >  n- 2 + \frac{1}{n}  - (n-1) \Big(  \big( \frac{n-3}{n-2}  \big)^{2^{\phi(n) - 2}}   \\ & \ \ \ \  \ \ \   \big( \frac{n-2}{n-1}  \big)^{3 \cdot 2^{\phi(n) -2} - 1}   \Big)  > n-2  \ . \end{array} \end{equation} Note that in the above derivation we assumed $E_{\hat{k}-1} > n- 2$. This assumption can be proven by using $(\ref{n1})$ as follows, $$ \begin{array}{llll} & E_{\hat{k}-1} > n \Bigg( 1 - \Big( \frac{n-2}{n} \Big)^{2^{\phi(n)+1}} \Bigg) - 1 \geq n-2 \\ & \therefore \ \ \  \Big( \frac{n-2}{n} \Big)^{2^{\phi(n)+1}} \leq n^{-1} \\ & \therefore \ \ \ \phi(n) +1  \geq \text{log} \Big( \frac{\text{log}(n)}{\text{log}(n/(n-2))} \Big)  \ , \end{array} $$ where the last line holds due to $(\ref{phidef})$.

Rearranging $(\ref{f2})$ and taking logarithms yields, $$ \begin{array}{llll} & \frac{-2 \cdot \text{log}(n)}{-8 \cdot  \text{log}(n/(n-2))} > \\ &  \ \ \ \ \ \ \ \ \text{log}\Big( \frac{n-2}{n(n-1)^2} \Big) / \text{log}\Big( \frac{(n-2)^2 (n-3)}{(n-1)^3} \Big) \ . \end{array} $$ The proof is complete by noting, $$ \begin{array}{llll} & \ \ \ \ \ n^{-2} > \frac{n-2}{n ( n-1)^2} \ , \\ &  \Big( \frac{n-2}{n} \Big)^{8} < \frac{(n-2)^2 (n-3)}{(n-1)^3} \ , \end{array} $$ the first line of which holds for all $n \geq 1$ and the second holds for all $n \geq 4$.

Next we proof the tightness of the lower bound $\hat{k}-2$ in regard to the expected time until the first node reaches FKS. This is done by showing that $E_{\hat{k}-3} < n - 2$, which follows essentially the same logic of Lem.$\ref{lemthm}$. For convenience denote $\chi_{\phi(n)-1} = \Big( \frac{n-3}{n-2} \Big)^{2^{\phi(n)-1}} \Big( \frac{n-2}{n-1} \Big)^{3 \cdot 2^{\phi(n)-1} -1}$. Applying the previous result $E_{\hat{k}-1} > n-2$, Lem.$\ref{lemthm}$ and $(\ref{b11})$ we have, $$ \begin{array}{llll} &  E_{\hat{k} - 3} - E_{\hat{k}-1} <  \hat{E}_{\hat{k} - 3} -  (n-1) ( 1 - \gamma_{\phi(n)-1}) \\ & \ \ \ \ \ \ \ \ \ \ \ \ \ \ \ \ <   \hat{E}_{\hat{k} - 3} -  (n-2)  \\ & \therefore \ \ \  E_{\hat{k} - 3}  < E_{\hat{k} - 1} +  \hat{E}_{\hat{k} - 3} - n + 2 \\ & \ \ \ \ \ \ \ \ \ \ \ \ <  n -1 - n^{-1} +   \hat{E}_{\hat{k} - 3} - n  + 2  \\ & \ \ \ \ \ \ \ \ \ \ \ \  = \hat{E}_{\hat{k} - 3} +1 - n^{-1} \  .  \end{array} $$  Upper bounding the RHS of the above by $(n- 1 - 3  n^{-1})$ and solving for $\hat{E}_{\hat{k}-3}$ yields the condition, $$ \hat{E}_{\hat{k}-3} <  n - 2 -  2 n^{-1} \ . $$

Notice that the upper bound $(n -1 - 3 n^{-1})$ was chosen without any specific precondition. For this reason, along with the fact that $E_{\hat{k}-3} < E_{\hat{k}-1}$ for all $\ell \in \{ 0 , 1, \ldots, n-2 \}$,  we must confirm the following, $$ \begin{array}{llll} &  \hat{E}_{\hat{k} - 3} - E_{\hat{k}-1} <   \hat{E}_{\hat{k} - 3}  -  (n-1)(1- \gamma_{\phi(n)-1}) \\ & \ \ < n-2 -2 n^{-1} - (n-1)(1- \gamma_{\phi(n)} ) \\ & \ \  =  - 1- 2n^{-1} + (n-1) \gamma_{\phi(n)} < 0\ . \end{array} $$ Rearranging the last inequality and taking logarithms yields,  \begin{equation}\label{j133}  \frac{\text{log}(n)}{  2 \cdot  \text{log} \Big( \frac{n}{( n-2) }  \Big)  } > \frac{   \text{log}\Big( \frac{n (n-1)^2 }{ 2 (n+1) (n-2)  } \Big) } {  \text{log}\Big( \frac{(n-1)^3}{(n-2)^2 (n-3)} \Big) } \ .   \end{equation} Note that $ \frac{ \text{log}(n)}{\text{log}(n/(n-2)} > \text{log}\Big( \frac{n(n-1)^2}{2(n-2)}\Big)$ since, $$ \begin{array}{llll} &  \frac{ \text{log}(n)}{\text{log}(n/(n-2)} > \text{log}\Big( \frac{n(n-1)^2}{2(n-2)}\Big) \\ & \ \ \ \ \ \  = \text{log}\Big( \frac{(n-1)^2}{2}\Big)  + \text{log}\big( n/(n-2) \big) \\ & \therefore  \ \frac{ \text{log}(n)}{\big( \text{log}(n/(n-2) \big)^2} >   \text{log}\Big( \frac{(n-1)^2}{2}\Big)   \  . \end{array} $$  Upper bounding $\text{log}\Big( \frac{(n-1)^2}{2}\Big)$ as $2   \text{log}(n)$ and canceling terms yields $ n < 2^{1/\sqrt{2}} (n-2)$. For this reason, we can utilize $(\ref{j13})$ and prove $(\ref{j133})$ by the following,   $$ \begin{array}{llll} &  \frac{1}{2} \text{log}\Big( \frac{n (n-1)^2 }{ 2 (n-2)  } \Big) \geq  \text{log}\Big( \frac{n (n-1)^2 }{2 (n+1)(n-2)  } \Big) \\ & \therefore \ \   \frac{n (n-1)^2 }{ 2 (n-2)}  \geq  \frac{n^4 (n-1)^4 }{ 4 (n+1)^2  (n-2)^2  }  \\ & \therefore \ \ \ \  2 (n-2)(n+1)^2 \geq n (n-1)^2 \\  & \therefore \ \ \ \ n^3 + 2 n^2 + n-4  \geq 0 \ , \ \ \forall \ n \geq 1 \ .  \end{array} $$ \qed

\begin{thrm}\label{strthm} For any set of $(n-1)$ random graphs $\mathcal{R}(n-1)$, the minimum time at which the entire network obtains FKS is $k = \lceil \mathrm{log}(n) \rceil$. \end{thrm}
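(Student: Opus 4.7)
The plan is to reduce the claim to a simple doubling bound on the cardinality of knowledge sets. The key structural observation is that, by definition $(\ref{probg})$, every realization $\tilde{G}(k)$ is a directed Hamilton cycle on $\mathcal{V}$, so each node has exactly one incoming edge at every time step. Combined with the flooding rule $(\ref{flood})$, this means that for every $i \in \mathcal{V}$ and $k \geq 0$ there is a unique predecessor $p(i,k) \in \mathcal{V}_{-i}$ satisfying $\mathcal{K}_i(k+1) = \mathcal{K}_i(k) \cup \mathcal{K}_{p(i,k)}(k)$.

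From here I would induct on $k$ to prove that $|\mathcal{K}_i(k)| \leq 2^k$ for every $i \in \mathcal{V}$. The base case $k = 0$ is immediate from the initialization $\mathcal{K}_i(0) = \{d_i\}$. For the inductive step, the union bound combined with the inductive hypothesis applied to both $i$ and its unique predecessor $p(i,k)$ gives $|\mathcal{K}_i(k+1)| \leq |\mathcal{K}_i(k)| + |\mathcal{K}_{p(i,k)}(k)| \leq 2 \cdot 2^k = 2^{k+1}$, closing the induction.

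Since FKS requires $|\mathcal{K}_i(k)| = n$ for every $i \in \mathcal{V}$, the doubling bound forces $n \leq 2^k$, and integrality of $k$ then yields $k \geq \lceil \log_2 n \rceil$. This is precisely the universal lower bound on the earliest time that any realization of $\mathcal{R}(n-1)$ can conceivably reach FKS, which is exactly what Sec.$\ref{sec:sim}$ invokes via $\hat{k}(n) \geq \lceil \log_2 n \rceil$.

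There is no serious obstacle here: the entire argument rests on the in-degree-one property of directed Hamilton cycles, which makes the doubling step automatic. The one subtlety worth flagging is that the value $\lceil \log_2 n \rceil$ is not always attained by an explicit sequence of Hamilton cycles (a direct case analysis rules out FKS at $k=2$ when $n=4$, since the four $2$-element knowledge sets after one step admit only two disjoint complementary pairings, which cannot be stitched into a Hamilton cycle); the theorem should therefore be read as identifying the earliest time FKS can possibly occur, rather than as a tight best-case construction.
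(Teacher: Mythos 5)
Your proof is essentially the paper's own argument: both rest on the observation that each $\tilde{G}(k)$ gives every node exactly one incoming edge, so each knowledge set can at most double per step, yielding $|\mathcal{K}_i(k)|\leq \min\{2^k,n\}$ and hence the lower bound $k\geq\lceil\log_2 n\rceil$ for FKS. Your closing caveat about attainability is well taken and is the one point where you go beyond the paper: its proof simply asserts that the bound is achieved when $\log_2 n\in\mathbb{N}$, whereas your $n=4$ analysis (the complementary pairing after one step forms two $2$-cycles, not a Hamilton cycle) shows that assertion is unjustified under the model $(\ref{probg})$, so the theorem is safe only as the universal lower bound that Sec.~\ref{sec:sim} actually invokes.
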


\begin{proof} At initial time $k = 0$ the knowledge set of each node $i \in \mathcal{V}$ contains only $\{ d_i \}$, that is $\mathcal{K}_i(0) = \{ d_i \}$. At time $k = 1$ each node receives a signal from exactly one other node in the network, thus $| \mathcal{K}_i(1) | = 2$ for each node $i \in \mathcal{V}$. At time $k = 2$, each node receives a signal from exactly one other node in the network, thus the maximum cardinality of any knowledge set $\mathcal{K}_i(2)$ is $4$. Proceeding in this way, we find $| \mathcal{K}_i( k ) | \leq \text{min} \{  2^k , n \}$ for any sequence of $(k)$ random graphs $\mathcal{R}(k)$. If $\text{log}(n) \in \mathbb{N}$, then all nodes reach FKS at $k = \text{log}(n)$. If $\text{log}(n) \notin \mathbb{N}$, then at time $k = \lfloor \text{log}(n) \rfloor$ the cardinality of each knowledge set is less than or equal to $2^{\lfloor \text{log}(n) \rfloor}$, from whence it is clear that $2^{\lceil \text{log}(n) \rceil}  >  n$, and hence the network will reach FKS at a minimum time of $k = {\lceil \text{log}(n) \rceil}$. \end{proof}

\begin{prop}\label{lem1proof}  For all $n \geq 3$, $ \mathrm{log}(n^2) >2 + \phi(n) >  \mathrm{log}(n)$.  \end{prop}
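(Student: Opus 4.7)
The plan is to first simplify by collapsing the definition $\phi(n)=-\log\!\bigl(1-\tfrac{\log(n-2)}{\log n}\bigr)$ via elementary log identities. Writing $1-\tfrac{\log(n-2)}{\log n}=\tfrac{\log(n/(n-2))}{\log n}$ gives
\begin{equation*}
2+\phi(n)=\log\!\left(\frac{4\log n}{\log\!\bigl(n/(n-2)\bigr)}\right).
\end{equation*}
Since $\log$ is strictly increasing, the chain $\log n<2+\phi(n)<\log(n^2)$ is equivalent (exponentiating base $2$) to the single chain
\begin{equation*}
n\,\log\!\bigl(n/(n-2)\bigr)\;<\;4\log n\;<\;n^{2}\log\!\bigl(n/(n-2)\bigr).
\end{equation*}
So the whole proposition reduces to proving these two polynomial/log inequalities.

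For the upper bound $4\log n<n^{2}\log(n/(n-2))$, I would use the standard inequality $\ln(1+x)\ge x/(1+x)$ for $x>0$. Applied to $x=2/(n-2)$ this gives $\log_2(n/(n-2))\ge \tfrac{2}{n\ln 2}$, so $n^{2}\log_2(n/(n-2))\ge 2n/\ln 2$. It then suffices to show $2n/\ln 2>4\log_2 n$, i.e.\ $n>2\ln n$, which is easily verified by noting $n-2\ln n$ is increasing for $n\ge 3$ (derivative $1-2/n>0$) and is positive at $n=3$ (since $3>2\ln 3\approx 2.197$).

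For the lower bound $n\log(n/(n-2))<4\log n$, I would use the complementary inequality $\ln(1+x)\le x$, giving $\log_2(n/(n-2))\le \tfrac{2}{(n-2)\ln 2}$. Thus $n\log_2(n/(n-2))\le \tfrac{2n}{(n-2)\ln 2}$, and the target inequality reduces to $\tfrac{n}{2(n-2)}<\ln n$. For $n\ge 5$, the left side is bounded above by $5/6<1<\ln 5\le \ln n$, so the inequality holds. The cases $n=3$ and $n=4$ are the obstacle here, since this bound via $\ln(1+x)\le x$ is too loose when $n-2$ is very small; I would dispatch them by direct evaluation: at $n=3$ the inequality is $3\log_2 3<4\log_2 3$ (true trivially), and at $n=4$ it is $4\log_2 2=4<8=4\log_2 4$.

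The only real friction is this boundary issue for small $n$, which is harmless because one can just check $n\in\{3,4\}$ by hand. The rest is a straightforward chain of elementary log manipulations combined with the two canonical two-sided estimates $x/(1+x)\le\ln(1+x)\le x$.
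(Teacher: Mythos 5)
Your proof is correct and follows essentially the same route as the paper: both sides of the claim are reduced, after exponentiating, to the pair of inequalities $n\,\log\bigl(n/(n-2)\bigr) < 4\log n < n^{2}\log\bigl(n/(n-2)\bigr)$, which are then settled by elementary estimates on the logarithm. The only (harmless) difference is which estimates are invoked: you use the canonical bounds $x/(1+x)\le\ln(1+x)\le x$ together with a direct check at $n\in\{3,4\}$, whereas the paper uses the sharper Tops{\o}e bounds $\ln(1+x)\le 2x/(2+x)$ and $\ln(1+x)\le x/\sqrt{1+x}$, which let it avoid the small-$n$ case split.
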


 \begin{proof} Applying $(\ref{phidef})$ we will first show,  \begin{equation}\label{b1}  \text{log}(n^2) > 2 - \text{log} \Big( 1 - \frac{\text{log}(n-2)}{\text{log}(n)} \Big)  \  .   \end{equation} Rearranging $(\ref{b1})$ yields, \begin{equation}\label{b2}  - \frac{4}{n^2} > \frac{\text{log}(1 - \frac{2}{n})}{\text{log}(n)} \ . \end{equation} Using the upper bound $\text{log}(1+x) \leq \frac{2x}{2+x}$ for $x \in (-1,0]$ \cite{ftop}, we can let $x = -2/n$ and obtain from $(\ref{b2})$,  $$ \text{log}(n) < \frac{n^2}{2n-2} \ . $$ Applying the upper bound $\text{log}(1+x) \leq \frac{x}{\sqrt{1+x}}$ for $x \geq 0$ \cite{ftop}, we can let $x = n -1$ and obtain,  $$ \text{log}(n)  \leq \frac{n-1}{\sqrt{n}}  \ , $$ thus it remains to be shown, $$ \begin{array}{llll} &  \frac{n-1}{\sqrt{n}}  < \frac{n}{2} \\ &   2 <  \frac{n}{\sqrt{n}-(1/ \sqrt{n})}  \ , \end{array} $$ which can be easily verified for all  $n \geq 1$.



Next we apply $(\ref{phidef})$ to show, \begin{equation}\label{b4}   2 - \text{log} \Big( 1 - \frac{\text{log}(n-2)}{\text{log}(n)} \Big) > \text{log}(n) \ .   \end{equation} Rearranging $(\ref{b4})$ yields, $$  - \frac{ \text{log}(1 - \frac{2}{n})}{ \text{log} (n)}  < \frac{4}{n}  \ , $$ which can be simplified as, \begin{equation}\label{b5}   n- 2 > n^{1 - (4/n)}  \ . \end{equation} It is clear that the left-hand side (LHS) of $(\ref{b5})$ increases with $n$ at a faster rate than the RHS of $(\ref{b5})$. It thus remains to be shown that for $n = 3$ the inequality $(\ref{b5})$ holds. For $n = 3$ the inequality $(\ref{b5})$ becomes $1 > 3^{1-(4/3)}$, thus $(\ref{b4})$ holds for all $n  \geq 3$.  \end{proof}

\begin{ex}\label{example} A Fixed Connected Digraph. \end{ex}

Consider a random graph $\tilde{G}(0)$ (see $(\ref{probg})$ for a formal definition, and see Fig.$\ref{fig7}$ for an illustration). If $G(k) =\tilde{G}(0)$ for all $k \in \{ 0, 1, \ldots, n-2\}$, then the digraph is connected and also ``fixed" (or ``time invariant"). By rearranging the labels of each node and utilizing $(\ref{mm1})$, the communication pattern of a fixed randomly chosen connected digraph can be defined by the powers of the following matrix $\tilde{M} \in \mathbb{R}^{n \times n}$, \begin{equation}\label{fixed} \tilde{M}_{n 1} = \tilde{M}_{ii} = \tilde{M}_{i (i+1)} = 1 \ , \ \forall \ i \in \mathcal{V}_{-n} \  . \end{equation} It is trivial to show that each column of $\mathcal{M}(k) = \tilde{M}^{k+1}$ has all non-zero elements only when $k \geq n-2$, and thus by $(\ref{mm2})$ each node reaches FKS exactly at time $k = n-1$.  This result validates the feasibility of the upper bounds stated in Lem.$\ref{lem1},\ref{lem2}$, and Thm.$\ref{thrm1}, \ref{thrm2}, \ref{thrm3}$. In summary, the upper bound of each respective result cannot be improved upon without further restricting the network communication conditions.  \qed

\begin{conj}\label{conj1} \emph{Given the communication constraints defined in Lem.$\ref{lem2}$, Thm.$\ref{thrm2}$, or Thm.$\ref{thrm3}$, there exists no distributed algorithm that can solve TP (cf. Def.$\ref{def4}$) without further conditions on the sequence of communication graphs after $G(n-1)$}. \qed  \end{conj}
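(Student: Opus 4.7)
The plan is to establish an indistinguishability (adversary) argument of the type standard in distributed computing lower bounds. Specifically, I would exhibit two executions, on networks of different sizes $n_1 < n_2$, both satisfying the cord/cycle conditions of Lem.$\ref{lem2}$ (resp. Thm.$\ref{thrm2}$ or Thm.$\ref{thrm3}$) at every time $k \in \{0,1,\ldots,n_2-2\}$, such that the sequence of incoming messages at some fixed node $v$ is bit-for-bit identical in both executions through time $(n_2-1)$. Since a distributed termination rule at $v$ is by definition a function of its local transcript (the incoming message history together with $d_v$), $v$ must choose the same termination time in both executions, contradicting correctness in at least one of them.

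The first step is to contrast with the CP-side, where the analogous argument fails: Lem.$\ref{lem1}$ forces $|\mathcal{K}_q(k+1)| \geq k+2$, so once node $q$ sees its knowledge set stabilize at cardinality $k$, it can conclude $n=k$; this is exactly the mechanism behind Cor.$\ref{cor1}$. The structural asymmetry I would exploit is that the output-cord condition $|\mathcal{O}^w(k)| \geq k+1$ guarantees only that $d_w$ proliferates, and says nothing about how much information any particular recipient accumulates. Thus, under DP-style conditions it is possible to have a node $v$ whose incoming-edge pattern is very sparse and whose knowledge set grows only by constants per round, so its local history carries no reliable counter for $n$.

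The second step is the explicit construction. Take $n_1 = n$ and $n_2 = n+1$, and fix a node $v$. For the smaller system, I would choose a schedule $\{G(k)\}$ that satisfies $(\ref{clem2})$ for every $w$ (resp.\ realizes $\nu(k)$-cycles or $\eta(k)$-cycles) while routing all of $v$'s incoming edges through a single ``stub'' neighbor that delivers a prescribed message sequence. For the larger system, I would embed the extra node $n+1$ into the schedule purely as padding: add it to output-cords of other sources and to the cycles required by Thm.$\ref{thrm2}$/Thm.$\ref{thrm3}$, but never connect it to $v$ and never let it disturb the incoming edges of $v$. The update $(\ref{flood})$ permits arbitrary extra edges without violating the existing conditions, so the combinatorial freedom is there.

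The main obstacle is verifying that the padded schedule still meets the tighter cycle conditions of Thm.$\ref{thrm2}$ and Thm.$\ref{thrm3}$ at \emph{every} time step, because $\nu(k)$ and $\eta(k)$ demand that \emph{every} node (including $n+1$) sit on a cycle of the prescribed size; for Lem.$\ref{lem2}$ alone the construction is easy, but for the theorems one must weave the new vertex into $\chi(k)$-cycles of each size class without adding any edge incident to $v$. I expect this to reduce to a concrete graph-padding lemma: given a schedule on $\mathcal{V}$ satisfying a $\chi(k)$-cycle condition and a forbidden vertex $v$, construct a schedule on $\mathcal{V}\cup\{n+1\}$ with the same condition, agreeing with the original on all edges incident to $v$. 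Once this padding lemma is secured, the indistinguishability argument closes, and Conj.$\ref{conj1}$ follows; the fact that the author flags the statement as genuinely open suggests that this padding step may require additional structural assumptions on the families $\nu(k), \eta(k)$, which is why I would regard it as the crux and not a routine step.
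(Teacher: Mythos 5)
You should first note that the paper offers \emph{no} proof of this statement: it is deliberately left as Conj.$\ref{conj1}$, and the author states in Sec.$\ref{sec:mainintro}$ that he foresees no way in which it can or cannot be proven. So there is nothing in the paper to match your argument against; the only question is whether your sketch actually closes the problem. It does not, although the indistinguishability/adversary framing is the natural line of attack and you are right that the ``padding lemma'' is the crux.

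The concrete gaps are these. (i) For Thm.$\ref{thrm2}$ and Thm.$\ref{thrm3}$ the hypotheses \emph{guarantee} FKS by time $n_2-1$, so in the padded execution the datum $d_{n+1}$ must reach $v$, and every node that ever transmits to $v$ must itself eventually hold $d_{n+1}$; under the flooding update $(\ref{flood})$ the messages $v$ receives are entire knowledge sets, so ``never connect $n+1$ to $v$ and never disturb $v$'s incoming edges'' is flatly inconsistent with the conditions you are required to preserve. The transcripts can at best agree up to the round before $d_{n+1}$ first appears in some in-neighbor of $v$, and your argument never establishes that this divergence happens too late to be exploited. (ii) A termination rule is free to defer its decision past the divergence point, so the fixed choice $n_2=n_1+1$ proves nothing by itself; you need a diagonalization in which $n_2$ is chosen \emph{after} the algorithm's termination time on the small execution is known, and then the late-time conditions bite hard: at $k=n_2-2$ the output-cord of Lem.$\ref{lem2}$ must contain all of $\mathcal{V}_{-w}$ including $v$, and at $k=\lceil n/2\rceil-1$ Thm.$\ref{thrm3}$ demands an $n$-cycle through every node, forcing incoming edges at $v$ that have no counterpart in the smaller execution (and the time index of that peak shifts with $n$, so the two constraint profiles are offset). (iii) The correctness condition for TP (Def.$\ref{def4}$) is never pinned down: to obtain a contradiction you need either that terminating before the problem is solved is itself a failure, or that $v$'s premature silence actually prevents DP/KP from being solved in the larger network, which imposes yet another routing constraint (essential data must pass through $v$ after the common termination time). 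Until the padding construction is exhibited and these three points are resolved, this remains a plausible plan of attack rather than a proof, consistent with the paper's own assessment that the statement is open.
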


\end{document}